\documentclass[a4paper,UKenglish,cleveref, autoref]{lipics-v2019}


\bibliographystyle{plainurl}

\title{An Improved FPTAS for 0-1 Knapsack}


\author{Ce Jin}{Institute for Interdisciplinary Information Sciences, Tsinghua University, Beijing, China}{jinc16@mails.tsinghua.edu.cn}{}{}

\authorrunning{C.\,Jin}

\Copyright{Ce Jin}

\ccsdesc[100]{Theory of computation~Algorithm design techniques}

\keywords{approximation algorithms, knapsack, subset sum}


\relatedversion{}

\supplement{}


\acknowledgements{
Part of this research was done while visiting Harvard University. I would like to thank Professor Jelani Nelson for introducing this problem to me, advising this project, and giving many helpful comments on my writeup. 
}

\nolinenumbers 

\hideLIPIcs  

\EventEditors{Christel Baier, Ioannis Chatzigiannakis, Paola Flocchini, and Stefano Leonardi}
\EventNoEds{4}
\EventLongTitle{46th International Colloquium on Automata, Languages, and Programming (ICALP 2019)}
\EventShortTitle{ICALP 2019}
\EventAcronym{ICALP}
\EventYear{2019}
\EventDate{July 9--12, 2019}
\EventLocation{Patras, Greece}
\EventLogo{eatcs}
\SeriesVolume{132}
\ArticleNo{71}

\begin{document}
\theoremstyle{plain}
\newtheorem{prop}[theorem]{Proposition}
\maketitle

\begin{abstract}
The 0-1 knapsack problem is an important NP-hard problem that admits fully polynomial-time approximation schemes (FPTASs). Previously the fastest FPTAS by Chan (2018) with approximation factor $1+\varepsilon$ runs in $\tilde O(n + (1/\varepsilon)^{12/5})$ time, where $\tilde O$ hides polylogarithmic factors.
In this paper we present an improved algorithm in $\tilde O(n+(1/\varepsilon)^{9/4})$ time, with only a $(1/\varepsilon)^{1/4}$ gap from the quadratic conditional lower bound based on $(\min,+)$-convolution.  Our improvement comes from a multi-level extension of Chan's number-theoretic construction, and a greedy lemma that reduces unnecessary computation spent on cheap items.
\end{abstract}

\newcommand{\R}{\mathbb{R}}
\newcommand{\Z}{\mathbb{Z}}
\newcommand{\eps}{\varepsilon}
\renewcommand{\P}{\mathbb{P}}

\section{Introduction}
\subsection{Background}
In the \textit{0-1 knapsack} problem, we are given a set $I$ of $n$ items where each item $i \in I$ has weight $w_i$ and profit $p_i$, and we want to select a subset $J\subseteq I$ such that $\sum_{j\in J}w_j \le W$ and $\sum_{j\in J}p_j$ is maximized.

The 0-1 knapsack problem is a fundamental optimization problem in computer science and is one of Karp's 21 NP-complete problems \cite{karp1972reducibility}.
An important field of study on NP-hard problems is to find efficient approximation algorithms.
A $(1+\eps)$-approximation algorithm (for a maximization problem) outputs a value $\mathrm{SOL}$ such that $\mathrm{SOL}\le \mathrm{OPT} \le (1+\eps)\cdot \mathrm{SOL}$, where $\mathrm{OPT}$ denotes the optimal answer. The 0-1 knapsack problem is one of the first problems  that were shown to have fully polynomial-time approximation schemes (FPTASs), i.e., algorithms with approximation factor $1+\eps$ for any given $0<\eps<1$ and running time polynomial in both $n$ and $1/\eps$. 

\begin{table}
\centering
\caption{FPTASs for 0-1 knapsack}
\label{history}
\begin{tabular}{ l|l|l } 
\hline \hline
 $O( n \log n + ( \frac{1}{\eps} )^4 \log \frac{1 }{\eps})$& Ibarra and Kim \cite{ibarra1975fast}& 1975 \\ 
 $O(n\log\frac{1}{\eps} + (\frac{1}{\eps})^4)$& Lawler \cite{lawler1979fast}& 1979 \\ 
 $O(n\log\frac{1}{\eps} + (\frac{1}{\eps})^3 \log^2 \frac{1}{\eps})$& Kellerer and Pferschy \cite{kellerer2004improved}& 2004 \\ 
$O(n\log\frac{1}{\eps} + (\frac{1}{\eps})^{5/2} \log^3 \frac{1}{\eps})$  (randomized)& Rhee \cite{rhee2015faster}& 2015\\
$O(n \log \frac{1}{\eps} + (\frac{1}{\eps})^{12/5}/2^{\Omega(\sqrt{\log(1/\eps)})})$& Chan \cite{chan2018approximation} & 2018\\
$O(n \log \frac{1}{\eps} + (\frac{1}{\eps})^{9/4}/2^{\Omega(\sqrt{\log(1/\eps)})})$& \textbf{This work} & \\
 \hline
 $O(\frac{1}{\eps} n^3)$ & Textbook algorithm & \\ 
 $O(\frac{1}{\eps} n^2)$ & Lawler \cite{lawler1979fast} & 1979\\ 
  $O( (\frac{1}{\eps}  )^2  n \log\frac{1}{\eps})$ & Kellerer and Pferschy \cite{kellerer1999new} & 1999\\ 
  $\tilde O( \frac{1}{\eps}    n^{3/2} )$ (randomized, Las Vegas) & Chan \cite{chan2018approximation} & 2018\\ 
  $ O( ((\frac{1}{\eps})^{4/3}n + (\frac{1}{\eps})^2)/ 2^{\Omega(\sqrt{\log(1/\eps)})} )$  & Chan \cite{chan2018approximation} & 2018\\  
 $O(   ((\frac{1}{\eps})^{3/2} n^{3/4}+ (\frac{1}{\eps})^2 ) /2^{\Omega(\sqrt{\log(1/\eps)})}+n\log\frac{1}{\eps}) $   & \textbf{This work} & \\  
  \hline\hline
\end{tabular}
\end{table}
There has been a long line of research on finding faster FPTASs for the 0-1 knapsack problem, as summarized in Table \ref{history}. 
The first algorithm with only subcubic dependence on $1/\eps$ was due to Rhee \cite{rhee2015faster}.  
Very recently, Chan \cite{chan2018approximation} gave an elegant algorithm for the 0-1 knapsack problem in deterministic $O(n \log \frac{1}{\eps} + (\frac{1}{\eps})^{5/2}/2^{\Omega(\sqrt{\log(1/\eps)})})$ via simple combination of the SMAWK algorithm \cite{aggarwal1987geometric} and a standard divide-and-conquer technique. The speedup of superpolylogarithmic factor $2^{\Omega(\sqrt{\log(1/\eps)})}$ is due to recent progress on $(\min,+)$-convolution \cite{bremner2014necklaces,williams2014faster,chan2016deterministic}. Using an elementary number-theoretic lemma, Chan further improved the algorithm to $O(n \log \frac{1}{\eps} + (\frac{1}{\eps})^{12/5}/2^{\Omega(\sqrt{\log(1/\eps)})})$ time, and obtained two new algorithms running in $\tilde O(\frac{1}{\eps} n^{3/2})$ and $O((\frac{1}{\eps})^{4/3} n + (\frac{1}{\eps})^2)/2^{\Omega(\sqrt{\log(1/\eps)})})$ time respectively, which are faster for small $n$.

FPTASs on several special cases of 0-1 knapsack are also of interest.
For the \textit{unbounded knapsack} problem, where every item has infinitely many copies, Jansen and Kraft \cite{jansen2018faster} obtained an $O(n + ( \frac{1}{\eps} )^2 \log^3 \frac{1}{\eps} )$-time algorithm; the unbounded version can be reduced to 0-1 knapsack with only a logarithmic blowup in the problem size \cite{cygan2019problems}.  
For the \textit{subset sum} problem, where every item has $p_i=w_i$, Kellerer et al.  \cite{kellerer2003efficient} obtained an algorithm with $O(\min \{n/\eps, n + (\frac{1 }{\eps} )^2 \log\frac{1}{\eps}\})$ running time, which will be used in our algorithm as a subroutine. For the \textit{partition} problem, which is a special case of the subset sum problem where $W = \frac{1}{2}\sum_{i\in I}w_i$, Mucha et al. \cite{mucha2019subquadratic} obtained an algorithm with a subquadratic  $\tilde O(n+(\frac{1}{\eps})^{5/3})$ running time. 

On the lower bound side, recent reductions showed by Cygan~et~al. \cite{cygan2019problems} and  
K\"{u}nnemann et~al. \cite{kunnemann2017fine} imply  that 0-1 knapsack and unbounded knapsack have no FPTAS in $O((n+\frac{1}{\eps})^{2-\delta})$ time, unless $(\min,+)$-convolution has truly subquadratic algorithm \cite{mucha2019subquadratic}. It remains open whether 0-1 knapsack has a matching upper bound.   

\subsection{Our results}
In this paper we present improved FPTASs for the 0-1 knapsack problem. Our results are summarized in the following two theorems.
\begin{theorem}
\label{mainthm}
  There is a deterministic $(1+\eps)$-approximation algorithm for 0-1 knapsack with running time $O(n \log \frac{1}{\eps} + (\frac{1}{\eps})^{9/4} /2^{\Omega(\sqrt{\log(1/\eps)})})$.
\end{theorem}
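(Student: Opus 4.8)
The plan is to use the standard two-phase template for knapsack FPTASs --- a near-linear preprocessing phase that shrinks the instance, followed by a dynamic-programming phase that manipulates ``profit $\to$ minimum weight'' curves --- and to put every new idea into the second phase.

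\emph{Preprocessing (the $O(n\log\tfrac1\eps)$ term).} First run a greedy constant-factor approximation to obtain a value $P$ with $\mathrm{OPT}\in[P,2P)$, discard items with $w_i>W$, and rescale so that $2P=\Theta(1/\eps)$. Call an item \emph{large} if its (rescaled) profit is $\ge\Theta(1)$ and \emph{small} otherwise. Rounding each large profit down to an integer costs only a $(1+O(\eps))$ factor, and for each integer value $v\in\{1,\dots,q\}$ with $q=\Theta(1/\eps)$ it suffices to keep the $O(q/v)$ cheapest large items of profit $v$, since any feasible solution uses at most $\lceil 2P/v\rceil$ of them; this leaves $\sum_{v\le q}O(q/v)=O(q\log q)=O(\tfrac1\eps\log\tfrac1\eps)$ large items. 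The sorting and selection take $O(n\log\tfrac1\eps)$ time. The small items are deferred and dealt with at the end; one rescaling $\eps\mapsto\Theta(\eps)$ at the very end absorbs the constant-factor losses throughout.

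\emph{Curves and the greedy lemma.} It remains to compute the portion up to profit $q$ of $f(x)=\max\{\text{total profit}:\text{total weight}\le x\}$ over the large items, i.e.\ the Minkowski sum, over the profit classes $j\in\{1,\dots,q\}$, of the convex increasing curves $h_j(k)=$ (weight of the $k$ cheapest profit-$j$ items), whose breakpoints on the profit axis are $\{0,j,2j,\dots\}$. The ``cheap'' classes with small $j$ are the obstacle, since $h_j$ then has $\Theta(q/j)$ breakpoints; the greedy lemma removes this by showing that, up to a further $(1+O(\eps))$ factor, a near-optimal solution takes from the union of all low-profit items a prefix in profit-to-weight-ratio order, so the entire low-profit part collapses to one concave curve with only $O(q/\tau)$ breakpoints, where $\tau$ is the threshold separating cheap from expensive classes. (The same argument lets the small items from preprocessing be merged back essentially for free via the subset-sum subroutine of Kellerer et al.\ \cite{kellerer2003efficient}.) Hence the real cost is building and merging the curves of the $O(q)$ expensive classes $j\in(\tau,q]$.

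\emph{The multi-level number-theoretic construction.} For the expensive classes I will recursively partition the profit range $(\tau,q]$ into arithmetically structured groups: at each level, group together the classes whose profits lie in one short (dyadic) window \emph{and} agree modulo a carefully chosen modulus, and, if helpful, re-round the profits within a group (affordable because only $O(\eps)$ of the error budget has been spent so far). Inside such a group the set of attainable total profits is sparse and the associated matrices are totally monotone, so the group curve can be assembled in subquadratic time by SMAWK \cite{aggarwal1987geometric} together with bounded-difference $(\min,+)$-convolution; Chan's construction \cite{chan2018approximation} is recovered as the one-level case. The $O(\mathrm{polylog}(1/\eps))$ resulting group curves, each of length $O(q)$, are then combined by general $(\min,+)$-convolution, which by the known speedups \cite{bremner2014necklaces,williams2014faster,chan2016deterministic} costs $O((1/\eps)^2/2^{\Omega(\sqrt{\log(1/\eps)})})$ and is absorbed, as is the $\mathrm{poly}(1/\eps)$ work on the cheap part. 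Finally evaluate $f$ at the residual weight budget, add back the small-items curve, and output the value; the total multiplicative loss is $1+O(\eps)$, which the final rescaling fixes.

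\emph{Where the difficulty lies.} The routine parts are the preprocessing estimates, the error bookkeeping needed to keep the accumulated loss at $1+O(\eps)$, and the SMAWK and convolution running-time bounds, all of which are standard. The crux --- and the whole source of the improvement from $(1/\eps)^{12/5}$ to $(1/\eps)^{9/4}$ --- is the design of the multi-level partition: choosing the number of levels, the window widths, and the moduli so that the cost of assembling group curves at each level is balanced against the cost of the outer merges and against the cheap-part cost, and verifying that the greedy lemma and the sparsity/total-monotonicity structure survive composition across levels without degrading. Making that balance land exactly at the exponent $9/4$ is the main obstacle.
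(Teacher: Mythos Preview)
Your outline has the right shape---preprocessing, a greedy structural lemma, and a multi-level number-theoretic construction---but two of the three pieces are misidentified, and you explicitly leave the balancing (the actual content of the $9/4$) unresolved.

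\textbf{The greedy lemma is not the one you describe.} You propose to handle the ``cheap'' profit classes $j\le\tau$ by plain greedy on $p_i/w_i$, collapsing them to one concave curve. But greedy over items with profit at most $\tau$ has additive error $\tau$, and since after your rescaling $\mathrm{OPT}=\Theta(1/\eps)$, you need $\tau=O(1)$ to stay within a $(1+O(\eps))$ factor---so nothing is gained. The paper's greedy lemma (Lemma~\ref{greedy}) is a different statement about \emph{unit profits}, not absolute profits: after reducing to $p_i\in[1,2]$, take the top $B=\Theta(\eps^{-1})$ items by $p_i/w_i$ as $H$, and let $L$ be the items whose unit profit is below $(1-\alpha)\min_{h\in H}p_h/w_h$. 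The lemma shows that in any optimal solution with total profit $\le B$, the $L$-items contribute at most $2/\alpha$. This is what allows $L$ to be solved with a small cap $B'=2/\alpha$ rather than $B=\eps^{-1}$.

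\textbf{The decomposition and the parameter balance.} The actual algorithm (Lemma~\ref{lemmaimproved}) splits $I$ into $H$, $M$, $L$ by unit profit. The set $H$ has $|H|=O(\eps^{-1})$ items and is solved by the $\tilde O(\eps^{-3/2}m^{3/4})$ routine (Lemma~\ref{lemmadependsonm}) in $\tilde O(\eps^{-9/4})$. The set $L$ is solved with cap $2/\alpha$ by the $\tilde O(\eps^{-2}B^{1/3})$ routine (Lemma~\ref{lemmadependsonB}) in $\tilde O(\eps^{-2}\alpha^{-1/3})$. The set $M$ has unit profits in a $(1-\alpha)$-window, so after rounding there are $O(\alpha/\eps)$ distinct unit-profit values, each handled by the subset-sum FPTAS and merged in $\tilde O(\alpha\eps^{-3})$. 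Setting $\alpha=\eps^{3/4}$ balances $M$ against $L$ at $\eps^{-9/4}$. This three-way split and the choice of $\alpha$ are the missing balancing argument you flag as ``the main obstacle.''

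\textbf{The multi-level construction.} Your description (``agree modulo a carefully chosen modulus'', ``$O(\mathrm{polylog}(1/\eps))$ group curves'') does not match what is needed. The construction (Definition~\ref{defsettower}, Lemmas~\ref{lemmanumbertheory}--\ref{lemmaeachgroup}) builds a \emph{set tower} $\Delta_1\subset\cdots\subset\Delta_d$ where each $\Delta_{i+1}$ consists of integer multiples of $\Delta_i$-elements in a fixed window; profits are approximated by $\Delta_d$-multiples, items are partitioned into $r$ groups via $\Delta_1$, and within each group one computes $d$ approximations with graded additive errors $\eps B_0<\cdots<\eps B_{d-1}$. The number of groups is $r\approx B^{1/3}$ (for Lemma~\ref{lemmadependsonB}) or $r\approx m^{3/4}\eps^{1/2}$ (for Lemma~\ref{lemmadependsonm}), not polylogarithmic; the merge over $r$ groups is precisely where the $\eps^{-2}r/2^{\Omega(\sqrt{\log(1/\eps)})}$ cost of Lemma~\ref{dc} enters and is balanced against the per-group work.
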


\begin{theorem}
\label{mainsmall}
   For $n=O(\frac{1}{\eps})$, there is a deterministic $(1+\eps)$-approximation algorithm for 0-1 knapsack with running time $O\Big (  \big (n^{3/4}(\frac{1}{\eps})^{3/2} + (\frac{1}{\eps})^2\big ) /2^{\Omega(\sqrt{\log(1/\eps)})}\Big)$.
\end{theorem}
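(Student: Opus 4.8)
The plan is to re-use essentially the whole machinery behind Theorem~\ref{mainthm}, only re-balancing the partition parameters to take advantage of the hypothesis $n=O(1/\eps)$. I would start from the same preprocessing: compute a greedy $2$-approximation $P_0$ with $P_0\le\mathrm{OPT}<2P_0$ to fix the scale, discard items of weight exceeding $W$, and split the remaining items into \emph{cheap} items (profit $<\eps P_0$) and \emph{expensive} items (profit $\ge\eps P_0$). Since an optimal solution contains $O(1/\eps)$ expensive items, a geometric bucketing of the profit range $[\eps P_0,2P_0]$ together with per-bucket rounding (finer in the cheaper buckets, whose items may be used more often) places the expensive items into $\tilde O(1/\eps)$ rounded profit classes while losing only a $(1+O(\eps))$ factor. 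Within one profit class the best sub-solution of any fixed cardinality is given by the lightest items of that class, so each class contributes a \emph{convex} weight-versus-profit function, which is the structure that later makes SMAWK applicable.

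For the cheap items I would invoke the greedy lemma of the paper: because every cheap item carries profit $<\eps P_0\le\eps\,\mathrm{OPT}$, replacing their exact contribution by a coarse concave surrogate costs only $(1+O(\eps))$, and this surrogate can be produced by the subset-sum FPTAS of Kellerer et al.\ \cite{kellerer2003efficient} in $O(n+(1/\eps)^2\log(1/\eps))$ time; one further $(\max,+)$-convolution merges the cheap-item surrogate into the expensive-item solution at cost $(1/\eps)^2/2^{\Omega(\sqrt{\log(1/\eps)})}$. This is the source of the additive $(1/\eps)^2/2^{\Omega(\sqrt{\log(1/\eps)})}$ term, and, since the greedy lemma throws away almost all cheap items before any convolution is performed, it keeps the dependence on $n$ mild.

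The heart of the argument is the expensive part. I would partition the expensive items (equivalently, their $\tilde O(1/\eps)$ profit classes) into $g$ groups and, inside each group, run the multi-level number-theoretic construction to build an approximate weight-profit function of size only $\tilde O(1/\eps)$; these $g$ group functions are then assembled by a balanced sequence of convolutions, using SMAWK on the convex pieces and the sub\-quadratic $(\min,+)$-convolution algorithm \cite{bremner2014necklaces,williams2014faster,chan2016deterministic} elsewhere. The cost is roughly the assembly cost $g\cdot\tilde O(1/\eps)$ plus the total in-group construction cost, each group holding $\approx n/g$ items; choosing $g$ to balance the two (which makes both $g$ and $n/g$ decrease as $n$ shrinks) is exactly what beats Chan's $(\tfrac1\eps)^{4/3}n$ bound in the window $n\in((1/\eps)^{2/3},\,O(1/\eps))$, and substituting the resulting optimum into the cost expression and simplifying gives the claimed $n^{3/4}(1/\eps)^{3/2}/2^{\Omega(\sqrt{\log(1/\eps)})}$.

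The main obstacle I expect is controlling the sizes of the intermediate functions during assembly. Naively the weight-profit function of a merged group has a domain of size $\Theta((1/\eps)^2)$, because a single expensive item can already carry profit $\Theta(\mathrm{OPT})$ while the cheapest expensive items need $\eps^2$-scale granularity; if such a function ever enters a convolution, the time blows up. So the multi-level number-theoretic construction has to be deployed precisely so that every function handed to a convolution has size $\tilde O(1/\eps)$ yet still certifies a $(1+O(\eps))$-approximation — that is the real content, and I expect it to carry over verbatim from the proof of Theorem~\ref{mainthm} with only the group count $g$ changed. Hand in hand with this is the error budget: each of the $O(\log(1/\eps))$ recursion levels and each rounding step consumes part of the $\eps$ budget, so I would rescale $\eps$ by a constant/logarithmic factor at the outset and track the accumulated multiplicative error to guarantee the final ratio is still $1+\eps$. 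Once the size and error invariants are in place, plugging $n=O(1/\eps)$ into the re-balanced running time yields Theorem~\ref{mainsmall}.
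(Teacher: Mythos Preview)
Your proposal conflates machinery from Theorem~\ref{mainthm} that the paper does \emph{not} use for Theorem~\ref{mainsmall}. The paper's proof of Theorem~\ref{mainsmall} is short: after the standard preprocessing (discarding items with $p_i\le\frac{\eps}{n}\max_j p_j$), it buckets items by profit magnitude into $O(\log\frac{n}{\eps})$ groups with $p_i\in[2^j,2^{j+1}]$, rescales each bucket to $p_i\in[1,2]$, applies Lemma~\ref{lemmadependsonm} to each bucket, and merges via Lemma~\ref{dc}. No greedy lemma, no cheap/expensive split, no subset-sum subroutine. Lemma~\ref{lemmadependsonm} is the actual content: for $m\le O(\eps^{-1})$ items with $p_i\in[1,2]$, partition their profit values into $r$ pieces via Lemma~\ref{lemmagrouping}, observe that each piece's natural profit cap is only $B=\Theta(m/r)$ (there are only $O(m/r)$ items per piece, each with profit at most~$2$), run the multi-level construction of Lemma~\ref{lemmaeachgroup} on each piece, and merge; choosing $r=m^{3/4}\eps^{1/2}2^{c\sqrt{\log(1/\eps)}}$ balances the per-piece cost against the merge cost and gives $O(\eps^{-3/2}m^{3/4}/2^{\Omega(\sqrt{\log(1/\eps)})})$.

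Two concrete gaps in your plan. First, the paper's ``greedy lemma'' (Lemma~\ref{greedy}) concerns items with low \emph{unit} profit $p_i/w_i$, not low absolute profit, and the subset-sum FPTAS (Corollary~\ref{subsetsum}) requires $p_i=w_i$; neither applies to your ``cheap'' items as you describe them. What you actually want for items with $p_i<\eps P_0$ is plain Lemma~\ref{sortgreed}-style greedy, and in any case the preprocessing of Section~\ref{secprelim} already discards them once $n=O(1/\eps)$. Second, you propose running the multi-level construction on expensive items whose profits span the whole range $[\eps P_0,2P_0]$, but Lemmas~\ref{improvedsmawk} and~\ref{lemmadense} are formulated for $p_i\in[1,2]$: the additive-$O(\eps)$ approximation by $\Delta$-multiples degrades to a useless $(1+O(1))$ factor at the bottom of an $\eps^{-1}$-wide profit range. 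The ``main obstacle'' you flag---intermediate functions of size $\Theta(\eps^{-2})$---is exactly the symptom of skipping the per-bucket rescaling to $[1,2]$; once that rescaling is in place, each bucket's profit function has range contained in $[0,2m]\subseteq[0,O(\eps^{-1})]$, and the obstacle never arises.
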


Theorem \ref{mainsmall} gives the current best time bound for $(\frac{1}{\eps})^{2/3}\ll n \ll \frac{1}{\eps}$, improving upon the previous $O((\frac{1}{\eps})^{4/3} n + (\frac{1}{\eps})^2)/2^{\Omega(\sqrt{\log(1/\eps)})})$ algorithm by Chan \cite{chan2018approximation}.
For $n\ll (\frac{1}{\eps})^{2/3}$, Chan's $\tilde O(\frac{1}{\eps}n^{3/2})$ time randomized algorithm \cite{chan2018approximation} remains the fastest.

For $n \gg \frac{1}{\eps}$, Theorem \ref{mainthm} gives a better time bound, improving upon the previous $O(n \log \frac{1}{\eps} + (\frac{1}{\eps})^{12/5} /2^{\Omega(\sqrt{\log(1/\eps)})})$ algorithm by Chan \cite{chan2018approximation}.

\subsection{Outline of our algorithm}
\label{secoutline}
We give an informal overview of our improved algorithm for 0-1 knapsack.

Using a known reduction \cite{chan2018approximation}, it suffices to solve an easier instance of 0-1 knapsack where  profits of all items satisfy $p_i\in [1,2]$. Here ``solving an instance'' means approximating the function $f(x) := [\text{maximum total profit of items with at most $x$ total weight}]$ for all $x\ge 0$, rather than for just a single point $x=W$.    
In this restricted case, simple greedy (sorting according to \textit{unit profits} $p_i/w_i$) gives an additive error of at most $\max_j p_j = O(1)$, so it suffices to approximate the capped function $\min \{\eps^{-1}, f(x)\}$ with approximation factor $1+O(\eps)$.
Chan gave an algorithm that $(1+\eps)$-approximates $\min\{B,f(x)\}$ in $\tilde O(n+\eps^{-2} B^{1/2})$ time (implied by \cite[Lemma 7]{chan2018approximation}), which immediately implies an $\tilde O(n+\eps^{-5/2})$ time FPTAS by setting $B=\eps^{-1}$.
\subparagraph*{Greedy.}
Now we explain how to use a greedy argument (described in detail in Section \ref{secmain}) to improve this algorithm to $\tilde O(n+\eps^{-7/3})$ time. 
We sort all items (with $p_i\in [1,2]$) in non-increasing order of unit profits $p_i/w_i$, and divide them into three subsets $H,M,L$ (items with high, medium, low unit profits), where $H$ contains the top $\Theta(\eps^{-1})$ items, and $L$ contains all items $i$ for which $p_i/w_i \le (1-\eps^{2/3})\cdot \min_{h\in H}\{p_h/w_h\} $, so there is a gap between the unit profits of $H$-items and $L$-items. Intuitively, there are sufficiently many $H$-items available, so it's not optimal to include too many cheap $L$-items when the knapsack capacity is not very big.  To be more precise,   we prove that in any optimal solution we care about (i.e., having optimal total profit smaller than $\eps^{-1}$), the total profit contributed by $L$-items cannot exceed $O(\eps^{-2/3})$.  Hence, for subset $L$ we only need to approximate up to $B=\Theta(\eps^{-2/3})$ in $\tilde O(n+\eps^{-2}B^{1/2}) = \tilde O(n+\eps^{-7/3})$ time.
Subset $H$ has only $O(\eps^{-1})$ items and can be solved using Chan's $\tilde O(\eps^{-4/3}n+\eps^{-2})$ algorithm in $\tilde O(\eps^{-7/3})$ time. To solve subset $M$, we round down the profit value $p_i$ for every item $i\in M$, so that the unit profit $p_i/w_i$ becomes a power of $(1+\eps)$. Then there are $O(\eps^{-1/3})$ distinct unit profit values in $M$. Items with the same unit profit can be solved together using the efficient FPTAS for \textit{subset sum} by Kellerer et al. \cite{kellerer2003efficient} in $\tilde O(n+\eps^{-2})$ time. Finally we merge the results for $H,M,L$. The total time complexity is $\tilde O(n+\eps^{-7/3})$. 
\subparagraph*{Multi-level number-theoretic construction.}
The above approach invokes two of Chan's algorithms: an $\tilde O(n+\eps^{-2}B^{1/2})$ algorithm (useful for small $B$) and an $\tilde O(\eps^{-4/3}n+\eps^{-2})$ algorithm (useful for small $n$).   
The key ingredient in these algorithms is a number-theoretic lemma: we can $(1+\eps)$-approximate all profit values $p_i\in [1,2]$ by multiples of elements from a small set $\Delta\subset[\delta,2\delta]$ of size $|\Delta|=\tilde O(\frac{\delta}{\eps})$  (small $|\Delta|$ can reduce the additive error incurred from rounding). 

Chan obtained an $\tilde O(n+\eps^{-2}B^{2/5})$ time algorithm using some additional tricks. First, evenly partition $\Delta$ into $r$ subsets $\Delta^{(1)},\dots,\Delta^{(r)}$, and divide the items into $P=P^{(1)}\cup \dots \cup P^{(r)}$ accordingly, so that profit values from $P^{(j)}$ are approximated by $\Delta^{(j)}$-multiples.  To $(1+\eps)$-approximate the profit function $f_j$ for each $P^{(j)}$, pick a threshold $B_0\ll B$, and return the combination of  a $(1+\eps)$-approximation of $\min\{f_j, B_0\}$ and  an $\eps B_0$-additive-approximation of  $\min\{f_j,B\}$. Since the size of $\Delta^{(j)}$ is only $|\Delta|/r$, the latter function can be approximated faster when $r\gg 1$. 
Finally, merge $f_j$ over all $1\le j\le r$. By fine-tuning the parameters $r,\delta,B_1$, the time complexity is improved to $\tilde O(n+\eps^{-2}B^{2/5})$.

Our new algorithm extends this technique to \textit{multiple levels}. To $(1+\eps)$-approximate the profit function $f_j$ for each $P^{(j)}$, we will pick $B_0 \ll B_1\ll \dots \ll B_{d-1} \ll B_d\approx B$, and compute the $\eps B_{i-1}$-additive-approximation of $\min\{f_j,B_{i}\}$, for all $i\in [d]$.
An issue of this multi-level approach is that, different levels have different optimal parameters $\delta_i$ and different $\Delta_i^{(1)},\dots,\Delta_i^{(r)}$, but we have to stick to the same partition of items $P=P^{(1)}\cup \dots \cup P^{(r)}$ over all levels. We overcome this issue by enforcing that $\Delta_i^{(j)}$ at level $i$ must be generated by multiples of elements from $\Delta_{i-1}^{(j)}$ at level $i-1$, so that $P^{(j)}$ can be approximated by $\Delta_i^{(j)}$-multiples for all levels.  To achieve this, we need a multi-level version of the number-theoretic lemma. We will discuss this part in detail in Section \ref{secextend}.

Using this multi-level construction, we obtain algorithms in $\tilde O(n+\eps^{-2}B^{1/3})$ time and $\tilde O(\eps^{-3/2} n^{3/4}+\eps^{-2})$ time. Combining these improved algorithms with the greedy argument previously described (the threshold which splits $M$ and $L$ needs to be adjusted accordingly), we obtain an algorithm in  $\tilde O(n+\eps^{-9/4})$ time as claimed in Theorem \ref{mainthm}.
 
\section{Preliminaries}
\label{secprelim}
Throughout this paper, $\log x$ stands for $\log_2 x$, and $\tilde O(f)$ stands for $O(f\cdot \text{poly} \log (f))$. 

We will describe our algorithm with approximation factor $1+O(\eps)$, which can be lowered to $1+\eps$ if  we scale down $\eps$ by a constant factor at the beginning. 

We are only interested in the case where $n =O(\eps^{-4})$.  For greater $n$,  Lawler's $O(n\log \frac{1}{\eps} + (\frac{1}{\eps})^4)$ algorithm \cite{lawler1979fast} is already near-optimal. Hence we assume $\log n = O(\log \eps^{-1})$.

Assume $0<w_i\le W$ and $p_i>0$ for every item $i$. Then a trivial lower bound of the maximum total profit is $\max_j p_j$. At the beginning, we discard all items $i$ with $p_i \le \frac{\eps}{n} \max_j p_j$. Since the total profit of discarded items is at most $\eps \max_j p_j$, the optimal total profit is only reduced by a factor of $1+O(\eps)$. So we can assume that $\frac{\max_j p_j}{\min_j p_j} \le \frac{n}{\eps}$.

We adopt Chan's terminology in presenting our algorithm.  For a set $I$ of items, define the profit function
\begin{equation*}
f_I(x) = \max \Bigg \{ \sum_{i\in J} p_i  : \sum_{i \in J} w_i\le x, \;\; J \subseteq I \Bigg\}
\end{equation*}
over non-negative real numbers $x\ge 0 $. Note that $f_I$ is a monotone (nondecreasing) step function. The \textit{complexity} of a monotone step function refers to the number of its steps. 
  
  We say that a function $\tilde f$ approximates a function $f$ with factor $1+\eps$ if $\tilde f(x)\le f(x) \le (1+\eps)\tilde f(x)$ for all $x\ge 0$.  We say that $\tilde f$ approximates $f$ with additive error $\delta$ if $\tilde f(x)\le f(x)  \le  \tilde f (x) + \delta$ for all $x \ge 0$. Our goal is to approximate $f_I$ with factor $1+O(\eps)$ on the input item set $I$.

Let $I_1,I_2$ be two disjoint subsets of items, and $I = I_1 \cup I_2$. We have $f_{I} = f_{I_1} \oplus f_{I_2}$, where $\oplus$ denotes the $(\max,+)$-convolution, defined by 
$(f\oplus g)(x) = \max_{0\le x'\le x}(f(x')+g(x-x'))$. If two non-negative monotone step functions $f,g$ are approximated with factor $1+\eps$ by functions $\tilde f,\tilde g$ respectively, then $f \oplus g$ is also approximated by $\tilde f \oplus \tilde g$ with factor $1+\eps$.

For a monotone step function $f$ with range\footnote{Here \textit{range} refers to the set of possible output values of the function.} contained in $\{0\} \cup [A,B]$, we can obtain a function $\tilde f$ with complexity only $O(\eps^{-1}\log (B/A))$ which approximates $f$ with factor $1+\eps$, by simply rounding $f$ down to powers of $(1+\eps)$.  For our purposes, $B/A$ will be bounded by polynomial of $n$ and $1/\eps$, hence we may always assume that the approximation results are  monotone step functions with complexity $\tilde O(\eps^{-1})$.

For an item  set $I$ with the same profit $p_i = p$ for every item $i \in I$, the step function $f_I$ can be exactly computed in $O(n\log n)$ time by simple greedy: the function values are $0,p,2p,\dots,np$ and the $x$-breakpoints are $w_1,w_1+w_2,\,\dots,\,w_1+\dots+w_n$, after sorting all  $w_i$'s in nondecreasing order. We say that a monotone step function is \textit{$p$-uniform} if its function values are of the form $0, p, 2p, \dots , lp$ for some $l$.
We say that a $p$-uniform function is \textit{pseudo-concave} if the differences of consecutive $x$-breakpoints are nondecreasing from left to right.
In the previous case where all $p_i$'s are equal to $p$, $f_I$ is indeed $p$-uniform and pseudo-concave.

\section{Chan's techniques}

 In this section we review several useful lemmas by Chan \cite{chan2018approximation}.

\subsection{Merging profit functions}
\begin{lemma}[\text{\cite[Lemma 2(i)]{chan2018approximation}}]
\label{dc}
Let $f_1,\dots, f_m$ be monotone step functions with total complexity $O(n)$ and
ranges contained in $\{ 0\} \cup [A, B]$. Then we can compute a monotone step function that approximates $f_1 \oplus\dots \oplus f_m$ with factor $1+O(\eps)$ and complexity $\tilde O(\frac{1}{\eps}\log B/A)$ in $O(n) + \tilde O((\frac{1}{\eps})^2 m/2^{\Omega(\sqrt{\log(1/\eps)})} \log B/A)$ time.
\end{lemma}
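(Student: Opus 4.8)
The plan is to combine the standard divide-and-conquer tree for $(\max,+)$-convolution with the fast $(\min,+)$-convolution subroutine of \cite{bremner2014necklaces,williams2014faster,chan2016deterministic}, and to control the complexity blowup at each node of the tree by rounding intermediate functions down to powers of $(1+\eps)$. First I would build a balanced binary tree with the $m$ input functions at the leaves; each internal node holds the approximate $(\max,+)$-convolution of its two children. Since approximation factors multiply along $\oplus$ (as noted in the preliminaries, $\tilde f\oplus\tilde g$ approximates $f\oplus g$ with factor $1+\eps$ when $\tilde f,\tilde g$ do), a depth-$O(\log m)$ tree would naively accumulate a factor $(1+\eps)^{O(\log m)}$. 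The fix is to re-round after each merge: at every node, after convolving, round the resulting step function down to powers of $(1+\eps/\log m)$ (or simply absorb a $1/\log m$ into $\eps$ up front), so the total factor telescopes to $1+O(\eps)$, while the complexity of the function at any node is kept at $\tilde O(\frac1\eps\log(B/A))$ because its range still lies in $\{0\}\cup[A, mB]\subseteq\{0\}\cup[A,\mathrm{poly}(n,1/\eps)\cdot B]$, and $\log(mB/A)=\tilde O(\log(B/A))$ under our standing assumption $\log n = O(\log\eps^{-1})$.

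Next I would account for the running time. At each of the $O(\log m)$ levels of the tree, the functions partition into $O(m)$ pieces whose total complexity at the leaf level is $O(n)$ and at every internal level is $\tilde O(\frac{m}{\eps}\log(B/A))$ after the rounding step (each of the $\le m$ nodes carries complexity $\tilde O(\frac1\eps\log(B/A))$). A single $(\max,+)$-convolution of two monotone step functions of complexities $s_1,s_2$ reduces to a $(\min,+)$-convolution of two integer sequences of length $O(s_1+s_2)$ — one reads off the breakpoint gaps, negates, and runs the truncated $(\min,+)$-convolution — which the cited results perform in $O((s_1+s_2)^2/2^{\Omega(\sqrt{\log(s_1+s_2)})})$ time; since each $s_i=\tilde O(\frac1\eps\log(B/A))$, and again $\log(s_i)=\tilde O(\log\eps^{-1})$ so the superpolylogarithmic savings is $2^{\Omega(\sqrt{\log(1/\eps)})}$, one merge costs $\tilde O(\frac{1}{\eps^2}/2^{\Omega(\sqrt{\log(1/\eps)})}\log^2(B/A))$. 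Summing $O(m)$ merges over $O(\log m)$ levels — but charging each node only once — gives $\tilde O(\frac{m}{\eps^2}/2^{\Omega(\sqrt{\log(1/\eps)})}\log(B/A))$ for all internal work, plus the $O(n)$ needed to read the leaves and do the first, leaf-level convolutions (whose inputs have total complexity $O(n)$, not $\tilde O(m/\eps)$, so a linear-time merge at the bottom level suffices before rounding kicks in). This matches the claimed bound $O(n)+\tilde O((\frac1\eps)^2 m/2^{\Omega(\sqrt{\log(1/\eps)})}\log(B/A))$.

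The main obstacle, and the point requiring the most care, is the bookkeeping at the leaf level: the input functions have only total complexity $O(n)$, but after a single level of merging there could still be $\Theta(m)$ functions each of complexity up to $\tilde O(1/\eps)$, so one must argue that the very first round of convolutions — before any rounding has a chance to cap complexities — can be done in $O(n)$ total time rather than $\tilde O(m/\eps^2)$. This is handled by observing that two step functions of complexities $s_1,s_2$ convolve (even exactly, via SMAWK-style merging when one is pseudo-concave, or via the quadratic $(\min,+)$-convolution bound otherwise) in time polynomial in $s_1+s_2$, and $\sum(s_i)=O(n)$ over the leaves, so a crude $O((\sum s_i)^2)$ would be too much — instead one merges leaves pairwise and rounds immediately, so that the sum of \emph{squared} complexities entering level $2$ is already controlled. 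A clean way to package this is to note that after rounding, all non-leaf work is dominated by $O(m)$ convolutions of $\tilde O(1/\eps)$-complexity functions, giving the $\tilde O(m/\eps^2)$ term, and the leaf contribution is a one-time $O(n)$ cost; I would state this as the content of the proof rather than re-deriving the $(\min,+)$-convolution bound, which is imported as a black box.
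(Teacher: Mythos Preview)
Your proposal is essentially correct and matches the paper's approach: the paper does not give its own proof of this lemma but cites \cite{chan2018approximation} and remarks that it is proved by a divide-and-conquer method combined with the fast $(\min,+)$-convolution of \cite{bremner2014necklaces,williams2014faster,chan2016deterministic}, which is exactly the scheme you describe. One minor simplification: the leaf-level ``obstacle'' you worry about is a non-issue, since you can round each leaf function $f_i$ to powers of $(1+\eps)$ \emph{before} any merging, in $O(s_i)$ time per leaf and hence $O(n)$ total; after that every function at every level has complexity $\tilde O(\frac{1}{\eps}\log(B/A))$ and the $O(m)$ merges are uniformly bounded.
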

\begin{remark}
Lemma \ref{dc} is proved using a divide-and-conquer method, which was also used previously in \cite{kellerer1999new}. The speedup of superpolylogarithmic factor $2^{\Omega(\sqrt{\log(1/\eps)})}$ is due to recent progress on $(\min,+)$-convolution \cite{bremner2014necklaces,williams2014faster,chan2016deterministic}.
\end{remark}

Lemma \ref{dc} enables us to focus on a simpler case where all $p_i \in [1,2]$. For the general case, we divide the items into $O(\log \frac{\max_j p_j}{\min_j p_j}) = O(\log \eps^{-1})$ groups, each containing items with $p_i \in [2^{j},2^{j+1}]$ for some $j$ (which can be rescaled to $[1,2]$), and finally  merge the profit functions of all groups by using Lemma \ref{dc} in $\tilde O(n+\eps^{-2})$ time.  

Assuming $\eps^{-1}$ is an integer and every $p_i\in [1,2]$, we can  round every $p_i$ down to a multiple of $\eps$, introducing only a $1+\eps$ error factor. Then there are only $m=O(\eps^{-1})$ distinct values of $p_i$. For every value of $p_i$, the corresponding profit function $f_i$ is $p_i$-uniform and pseudo-concave, and can be obtained by simple greedy (as discussed in Section \ref{secprelim}). 

\subsection{Approximating big profit values using greedy}
\label{secbiggreedy}
When all $p_i$'s are small, simple greedy gives good approximation guarantee when the answer is big enough. 
\begin{lemma}
\label{sortgreed}
 Suppose $p_i \in [1,2]$ for all $i\in I$. For $B=\Omega(\eps^{-1})$,  the function $f_I$ can be approximated with  additive error $O(\eps B)$ in $O(n\log n)$ time.
\end{lemma}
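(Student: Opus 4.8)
The plan is to analyze the standard greedy algorithm that sorts items by unit profit $p_i/w_i$ in non-increasing order and, for a given capacity $x$, packs items in this order until the next item would overflow. Formally, for each $x\ge 0$ let $k=k(x)$ be the largest index such that $w_1+\dots+w_k\le x$ (items reindexed by decreasing unit profit), and set $\tilde f_I(x) = p_1+\dots+p_k$. First I would observe that $\tilde f_I$ is a monotone step function with at most $n$ steps, so it can be computed in $O(n\log n)$ time (dominated by the sort); clearly $\tilde f_I(x)\le f_I(x)$ since the greedy selection is a feasible packing of total weight at most $x$. It remains to prove the upper bound $f_I(x)\le \tilde f_I(x) + O(\eps B)$ for every $x$, under the hypothesis $B=\Omega(\eps^{-1})$.

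The key step is the classical fractional-relaxation argument. Fix $x$ and let $J^\star$ be an optimal integral solution with $\sum_{i\in J^\star}w_i\le x$ and $\sum_{i\in J^\star}p_i = f_I(x)$. Consider the fractional knapsack optimum $f^{\mathrm{frac}}_I(x)$, where items may be taken fractionally; it is a standard fact that the fractional optimum is attained by taking a prefix $1,2,\dots,k$ of the sorted order fully plus possibly a fraction of item $k+1$, and that $f_I(x)\le f^{\mathrm{frac}}_I(x)$. Hence
\[
f_I(x)\;\le\; f^{\mathrm{frac}}_I(x)\;\le\; (p_1+\dots+p_k) + p_{k+1}\;=\;\tilde f_I(x) + p_{k+1},
\]
where the middle inequality uses that the leftover fraction of item $k+1$ is at most one whole copy. (If no item $k+1$ exists, then greedy has packed everything and $f_I(x)=\tilde f_I(x)$.) Since every $p_j\in[1,2]$, we get $f_I(x)\le \tilde f_I(x)+2$.

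Finally I would convert the additive error $2$ into $O(\eps B)$ using the hypothesis $B=\Omega(\eps^{-1})$: this gives $\eps B = \Omega(1)$, so $2 = O(\eps B)$, completing the bound $f_I(x)\le \tilde f_I(x)+O(\eps B)$ for all $x$. I do not expect a genuine obstacle here — the only thing to be careful about is the boundary case where the prefix exhausts all items (no fractional item remains), and the routine justification that the fractional optimum is the sorted prefix, which relies precisely on the unit-profit ordering. One could alternatively avoid invoking the fractional relaxation and argue directly: any feasible $J^\star$ of weight $\le x$ has total profit at most $\tilde f_I(x) + p_{k+1}$ by an exchange argument comparing $J^\star$ against the greedy prefix, since replacing any item of $J^\star$ outside the prefix by the unused capacity inside cannot increase profit by more than one item's worth; both routes give the same $+2$ slack.
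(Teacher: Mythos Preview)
Your proposal is correct and follows essentially the same approach as the paper: sort by unit profit, take prefix sums as the step function $\tilde f$, and observe that the additive error is at most $\max_i p_i \le 2 = O(\eps B)$ when $B=\Omega(\eps^{-1})$. The paper simply asserts the classical fact that greedy has additive error $\max_i p_i$, whereas you spell out the fractional-relaxation justification; otherwise the arguments are identical.
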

\begin{proof}
Sort the items in nonincreasing order of unit profit $p_i/w_i$. Let $\tilde f$ be the monotone step function resulting from greedy, with function values $0,p_1,p_1+p_2,\dots,p_1 + \dots + p_n$ and $x$-breakpoints $0,w_1,w_1+w_2,\dots,w_1+\dots+w_n$.  It approximates $f_I$ with an additive error of $\max_i p_i \le 2 \le O(\eps B)$ for $B=\Omega(\eps^{-1})$. 
\end{proof}
When every $p_i\in [1,2]$, we set $B=\eps^{-1}$ and let $f_H$ denote the result from greedy (Lemma \ref{sortgreed}). Then we only need to obtain a function $f_L$ which  $1+O(\eps)$ approximates $\min\{f_I,B\}$, and finally return $\max\{f_L,f_H\}$ as a  $1+O(\eps)$ approximation of $f_I$ (because when $f_I(x)$ exceeds $B$, an additive error $O(\eps B)$ implies $1+O(\eps)$ approximation factor).

\subsection{Approximation using \texorpdfstring{$\Delta$}{Delta}-multiples of small set \texorpdfstring{$\Delta$}{Delta}}

For a set $\Delta$ of numbers, we say that $p$ is a $\Delta$-multiple if it is a multiple of $\delta$ for some $\delta \in  \Delta$.

\begin{lemma}[\text{\cite[Lemma 5]{chan2018approximation}}]
\label{improvedsmawk}
Let $f_1,\dots , f_m$ be monotone step functions with ranges contained in $[0, B]$. 
Let $\Delta  \subset [\delta, 8\delta]$. If every $f_i$ is $p_i$-uniform and pseudo-concave for some
$p_i \in [1, 2]$ which is a $\Delta$-multiple, then we can compute a monotone step function that approximates 
$\min \{f_1 \oplus \dots \oplus f_m, B\}$ with additive error $O(|\Delta|\delta)$ in $\tilde O( Bm/\delta)$ time. 
\end{lemma}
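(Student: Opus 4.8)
\medskip\noindent\textbf{Proof approach.}
The plan is to avoid forming the groups' profit functions separately and merging them afterwards; instead I would maintain a \emph{single} running monotone step function $g$ and fold in the $m$ input functions one at a time, processing them in an order grouped by their divisor in $\Delta$, and spending one SMAWK-based $(\max,+)$-convolution per input function.

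First, assign each $i\in\{1,\dots,m\}$ to a group $G_{\delta'}$ for some $\delta'\in\Delta$ with $p_i$ a multiple of $\delta'$ (which exists by hypothesis); at most $\min\{|\Delta|,m\}$ groups are nonempty, and I handle them in an arbitrary order $G_{\delta'_1},\dots,G_{\delta'_r}$. Initialize $g\equiv 0$. When group $G_{\delta'_j}$ is processed, I would (i) round $g$ down pointwise to the nearest multiple of $\delta'_j$, and then (ii) for each $i\in G_{\delta'_j}$ in turn set $g\leftarrow\min\{\,g\oplus\min\{f_i,B\}\,,\,B\,\}$. The convolution in (ii) is between an arbitrary monotone step function $g$ and a pseudo-concave one $\min\{f_i,B\}$ (a capped $p_i$-uniform function is still pseudo-concave), so the relevant matrix is totally monotone and its row maxima can be read off by SMAWK; since $\min\{f_i,B\}$ has complexity $O(B/p_i)=O(B/\delta)$ (as $p_i$ is a multiple of an element of $\Delta\subset[\delta,8\delta]$) and, as explained next, $g$ always has complexity $O(B/\delta)$, each such step runs in $\tilde O(B/\delta)$ time.

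Step (i) is what keeps $g$ small. Right after (i) all values of $g$ are multiples of $\delta'_j$; since every $p_i$ with $i\in G_{\delta'_j}$ is a multiple of $\delta'_j$, each convolution in (ii) outputs a function whose values are multiples of $\delta'_j$ below $B$, possibly together with the value $B$ itself, so $g$ stays a monotone function bounded by $B$ with only $B/\delta'_j+O(1)=O(B/\delta)$ steps throughout the group. Hence the total cost is $\tilde O(B/\delta)$ per input function and per rounding, i.e.\ $\tilde O(Bm/\delta)$ overall. For correctness I would invoke the identity $\max_{x'}\min\{a_{x'},B\}=\min\{\max_{x'}a_{x'},B\}$ (so that $\min\{(\min\{f,B\})\oplus h,B\}=\min\{f\oplus h,B\}$, making the repeated capping at $B$ harmless) together with the fact that $(\max,+)$-convolution with a nonnegative function cannot enlarge a pointwise additive error; since each execution of (i) lowers $g$ by at most $\delta'_j\le 8\delta$ and there are at most $|\Delta|$ of them, the final $g$ underestimates $\min\{f_1\oplus\dots\oplus f_m,B\}$ by at most $8|\Delta|\delta=O(|\Delta|\delta)$.

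I expect the crux to be precisely the trade-off built into (i)--(ii): the rounding must happen once per group (costing $O(\delta)$ additive error each, hence $O(|\Delta|\delta)$ in total) rather than once per item (which would cost $\Theta(m\delta)$) or not at all (which would let the value-granularity of $g$, and hence its step-count and the running time, grow out of control), and it is the grouping of the inputs by their $\Delta$-divisor that makes this trade-off available. The remaining ingredient — convolving an arbitrary monotone step function with a pseudo-concave one in near-linear time via SMAWK / total monotonicity — is standard and already available from Chan's earlier lemmas, so I do not anticipate difficulty there.
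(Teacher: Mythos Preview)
Your proposal is correct and matches the approach the paper sketches in its Remark following the lemma (the paper does not give a full proof, only citing Chan and outlining the idea): group the $f_i$'s by their divisor in $\Delta$, maintain a single running array of length $O(B/\delta)$, round it down to multiples of the new $\delta'$ each time the group changes (at most $|\Delta|$ roundings, each costing $O(\delta)$ additive error), and fold in each $f_i$ via a SMAWK-based convolution in $\tilde O(B/\delta)$ time. The only cosmetic difference is that the remark speaks of an ``array of length $B/\delta$'' where you speak of a step function with $O(B/\delta)$ steps, but these are the same object.
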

\begin{remark}
An intuition of Lemma \ref{improvedsmawk} is as follows. When $p_i$'s are exact  multiples of $\delta$, standard dynamic programming algorithm maintains a result array of length $B/\delta$, and adding a new $f_i$ to the result can be done in linear time (by exploiting the pseudo-concavity of $f_i$ using the SMAWK algorithm\footnote{The SMAWK algorithm \cite{aggarwal1987geometric} finds all row-minima in an $n\times n$ matrix $A$ satisfying the \textit{Monge property} $A[i,j]+A[i+1,j+1]\le A[i,j+1]+A[i+1,j]$  using only $O(n)$ queries. }). Now if the next $p_i$ to be considered is a multiple of $\delta'\neq \delta$, we first have to round down the current results to multiples of $\delta'$, introducing an  additive error of $\delta'$. We round our results for  $|\Delta|-1$ times, so smaller $|\Delta|$ implies smaller overall additive error.
\end{remark} 

\begin{corollary}
\label{coroverynaive}
Let $f_1,\dots , f_m$ be monotone step functions with ranges contained in $ [0, B]$. 
 If every $f_i$ is $p_i$-uniform and pseudo-concave for some
$p_i \in [1, 2]$, then we can compute a monotone step function that approximates 
$\min \{f_1 \oplus \dots \oplus f_m, B\}$ with factor $1+O(\eps)$  in $\tilde O(\eps^{-1}Bm)$ time.
\end{corollary}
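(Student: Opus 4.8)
The plan is to instantiate Lemma~\ref{improvedsmawk} with the trivial one-element set $\Delta=\{\eps\}$ (so $\delta=\eps$ and $|\Delta|=1$), after first rounding the profit scales $p_i$ so that they become $\Delta$-multiples.

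First I would assume without loss of generality that $\eps^{-1}$ is a positive integer, which can be enforced by scaling $\eps$ down by a factor at most $2$. For each $i$, replace $p_i$ by $p_i':=\lfloor p_i/\eps\rfloor\,\eps$, the largest multiple of $\eps$ not exceeding $p_i$. Since $p_i\in[1,2]$ and $\eps^{-1}$ is an integer, we get $p_i'\in[1,2]$, and moreover $p_i'\le p_i< p_i'+\eps\le(1+\eps)p_i'$. Let $f_i'$ be the $p_i'$-uniform step function obtained from $f_i$ by keeping the same $x$-breakpoints; then $f_i'$ is still pseudo-concave, its range is contained in $[0,B]$ because $f_i'\le f_i$ pointwise, and $f_i'$ approximates $f_i$ with factor $1+\eps$. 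By the composition property of $(\max,+)$-convolution recorded in Section~\ref{secprelim} (extended from two to $m$ functions), $f_1'\oplus\dots\oplus f_m'$ approximates $f_1\oplus\dots\oplus f_m$ with factor $1+\eps$, and consequently $\min\{f_1'\oplus\dots\oplus f_m',B\}$ approximates $\min\{f_1\oplus\dots\oplus f_m,B\}$ with factor $1+\eps$ as well.

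Next I would apply Lemma~\ref{improvedsmawk} to $f_1',\dots,f_m'$ with $\Delta=\{\eps\}\subset[\eps,8\eps]$: each $p_i'\in[1,2]$ is a $\Delta$-multiple and each $f_i'$ is $p_i'$-uniform and pseudo-concave, so we obtain in time $\tilde O(Bm/\eps)=\tilde O(\eps^{-1}Bm)$ a monotone step function $g$ that approximates $\min\{f_1'\oplus\dots\oplus f_m',B\}$ with additive error $O(|\Delta|\eps)=O(\eps)$. It then remains to upgrade this additive guarantee to a multiplicative one, and this is the only step needing a bit of care. Since every nonzero value taken by any $f_i'$ is at least $p_i'\ge 1$, the function $h:=\min\{f_1'\oplus\dots\oplus f_m',B\}$ takes values in $\{0\}\cup[1,B]$. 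After replacing $g$ by $\max\{g,0\}$, which is still a monotone step function lying weakly below $h$, at every $x$ with $h(x)=0$ we have $g(x)=0$, and at every $x$ with $h(x)\ge 1$ we have $h(x)-O(\eps)\le g(x)\le h(x)$, hence $g(x)\le h(x)\le g(x)+O(\eps)\le(1+O(\eps))g(x)$ using $g(x)\ge 1-O(\eps)$. Thus $g$ approximates $h$ with factor $1+O(\eps)$, and combining with the previous paragraph, $g$ approximates $\min\{f_1\oplus\dots\oplus f_m,B\}$ with factor $(1+O(\eps))(1+\eps)=1+O(\eps)$, within the claimed running time. The main (minor) obstacle is exactly this additive-to-multiplicative conversion, which works only because the profit scales are bounded below by $1$; everything else is a direct application of Lemma~\ref{improvedsmawk}.
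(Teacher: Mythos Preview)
Your proof is correct and follows essentially the same approach as the paper's: round each $p_i$ down to a multiple of $\eps$ and apply Lemma~\ref{improvedsmawk} with $\Delta=\{\eps\}$, $\delta=\eps$. The paper's proof is terser and leaves the additive-to-multiplicative conversion implicit, whereas you spell it out carefully using the fact that all nonzero values are at least~$1$; this extra detail is fine and does not constitute a different argument.
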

\begin{proof}
Assuming $\eps^{-1}$ is an integer,  adjust every $p_i$ down to the nearest multiple of $\eps$, and adjust $f_i$ accordingly. This introduces a $1+\eps$ overall error factor. Then use Lemma \ref{improvedsmawk} with $\delta=\eps, \Delta=\{\eps\}$ to compute the desired function in $\tilde O(Bm\eps^{-1})$ time.
\end{proof}

\section{Extending Chan's number-theoretic construction} 
\label{secextend}
As mentioned in Section \ref{secoutline},  the main results of this section are two approximation algorithms in $\tilde O(n+\eps^{-2}B^{1/3})$ and $\tilde O(\eps^{-3/2}n^{3/4}+\eps^{-2})$  time respectively (the latter time bound assumes $n=O(1/\eps)$). These results rely on Lemma \ref{improvedsmawk}. 
\subsection{Number-theoretic construction}
\label{secconstruct}

To avoid checking boundary conditions, from now on we assume $\eps>0$ is sufficiently small.

Our algorithm extends Chan's technique by using a multi-level structure defined as follows. 
\begin{definition}
For fixed parameters $\delta_1,\delta_2,\dots,\delta_d$ satisfying condition
\begin{equation}
\label{eqnparacondition}
\eps \le \delta_1,\,\, \delta_i\le \delta_{i+1}/2,\,\, \delta_d\le 1/8
\end{equation}
and a finite real number set $\Delta_1\subset [\delta_1,8\delta_1]$, a \emph{set tower} $(\Delta_1,\Delta_2,\dots,\Delta_d)$ generated by $\Delta_1$ is defined by recurrence\footnote{For a number $k$ and a set $A$ of numbers, $kA:= \{ka: a \in A\}$.}
\begin{equation}
\Delta_{i+1}:=[\delta_{i+1},8\delta_{i+1}] \cap \bigcup_{k\in \Z} k\Delta_i,\,\,\,\, i=1,2,\dots,d-1.
\end{equation}
We refer to $\Delta_1$ as the \emph{base set} and $\Delta_d$ as the \emph{top set} of this set tower. We also say that the base set $\Delta_1$ \textit{generates} the top set $\Delta_d$.

If $\Delta_d^\ast$ is the top set generated by a singleton base set $\Delta_1^\ast = \{x\}$, then for every $y\in \Delta_d^\ast$ we say $x$ \emph{generates} $y$. 
\label{defsettower}
\end{definition}
We have the following simple facts about set towers.
\begin{prop}
\label{propmulti}
 If $x$ generates $y$ then $x\in \Delta_1$ implies $y \in \Delta_d$. Conversely, for every $y\in \Delta_d$, there exists $x\in \Delta_1$ which generates $y$, and for every  $1\le i\le d$ there exists $z\in \Delta_i$ such that both $y/z$ and $z/x$ are integers.
\end{prop}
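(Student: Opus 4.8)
The plan is to treat the two directions separately, using monotonicity of the tower recurrence for the forward implication and a level-by-level backtracking argument for the converse. For the first claim, note that the operator $A \mapsto [\delta_{i+1},8\delta_{i+1}] \cap \bigcup_{k\in\Z}kA$ is monotone under set inclusion: if $A\subseteq B$ then $\bigcup_{k\in\Z}kA \subseteq \bigcup_{k\in\Z}kB$, and intersecting both sides with the same interval preserves the inclusion. Now if $x$ generates $y$, then $y\in\Delta_d^\ast$ where $(\Delta_1^\ast,\dots,\Delta_d^\ast)$ is the set tower with singleton base $\Delta_1^\ast=\{x\}$ and the same parameters $\delta_1,\dots,\delta_d$. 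Assuming in addition $x\in\Delta_1$, we have $\Delta_1^\ast=\{x\}\subseteq\Delta_1$, and a routine induction on $i$ via the monotonicity above yields $\Delta_i^\ast\subseteq\Delta_i$ for every $i$; in particular $y\in\Delta_d^\ast\subseteq\Delta_d$.

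For the converse, given $y\in\Delta_d$ I would peel off one level at a time. By the defining recurrence there are $z_{d-1}\in\Delta_{d-1}$ and $k_{d-1}\in\Z$ with $y=k_{d-1}z_{d-1}$; since all numbers involved are positive reals, $k_{d-1}$ is in fact a positive integer. Iterating down to the base produces elements $z_i\in\Delta_i$ for $i=d-1,d-2,\dots,1$ and positive integers $k_1,\dots,k_{d-1}$ with $z_i=k_{i-1}z_{i-1}$ for $2\le i\le d$, where $z_d:=y$. Set $x:=z_1\in\Delta_1$. Then $z_i=(k_{i-1}\cdots k_1)\,x$ and $y=(k_{d-1}\cdots k_i)\,z_i$, so both $z_i/x$ and $y/z_i$ are integers; since $z_i\in\Delta_i$, taking $z:=z_i$ establishes the final part of the statement for each $i$.

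It then remains to check that this $x$ actually generates $y$. I would run the forward construction from the singleton base $\{x\}$, producing a tower $(\Delta_1^\ast,\dots,\Delta_d^\ast)$, and show by induction that $z_i\in\Delta_i^\ast$ for all $i$. The base case $z_1=x\in\Delta_1^\ast$ is immediate; for the inductive step, if $z_{i-1}\in\Delta_{i-1}^\ast$ then $z_i=k_{i-1}z_{i-1}$ is a $\Delta_{i-1}^\ast$-multiple, and $z_i\in[\delta_i,8\delta_i]$ because $z_i\in\Delta_i\subseteq[\delta_i,8\delta_i]$, so $z_i\in\Delta_i^\ast$ by the recurrence. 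Taking $i=d$ gives $y=z_d\in\Delta_d^\ast$, i.e.\ $x$ generates $y$. The whole argument is bookkeeping; the one point needing a little care is noticing that the interval-membership required at each forward step is exactly the membership $z_i\in\Delta_i$ already secured during the backtracking, so I do not anticipate any genuine obstacle.
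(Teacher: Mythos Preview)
Your proof is correct. The paper does not actually supply a proof of this proposition---it lists it among the ``simple facts about set towers'' and moves on---and your argument (monotonicity of the tower operator for the forward direction, level-by-level unwinding of the recurrence plus a forward replay for the converse) is precisely the routine verification the authors are implicitly leaving to the reader.
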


\begin{prop}
\label{propbound}
For any $1\le i\le d$, $|\Delta_{i}| \le 8^{i-1}(\delta_i/\delta_1)  |\Delta_1|$, and we can compute $\Delta_i$ in $\tilde O( 8^{i-1}(\delta_i/\delta_1)  |\Delta_1|)$ time given $\Delta_1$ as input.
\end{prop}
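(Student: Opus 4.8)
The plan is to prove both assertions simultaneously by induction on $i$, the whole argument reducing to one elementary counting step: bounding how many integer multiples of a single element $\delta\in\Delta_i$ can land in the window $[\delta_{i+1},8\delta_{i+1}]$.

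First I would fix $\delta\in\Delta_i$. Since $\Delta_i\subseteq[\delta_i,8\delta_i]$ we have $\delta\ge\delta_i$, so the integers $k$ with $\delta_{i+1}\le k\delta\le 8\delta_{i+1}$ lie in an interval of length $7\delta_{i+1}/\delta$ and hence number at most $7\delta_{i+1}/\delta+1\le 7\delta_{i+1}/\delta_i+1$. By condition \eqref{eqnparacondition} we have $\delta_{i+1}/\delta_i\ge 2\ge 1$, so this count is at most $8\delta_{i+1}/\delta_i$. Because $\Delta_{i+1}\subseteq\bigcup_{\delta\in\Delta_i}\big(\Z\delta\cap[\delta_{i+1},8\delta_{i+1}]\big)$, summing over the $|\Delta_i|$ choices of $\delta$ gives the recurrence $|\Delta_{i+1}|\le(8\delta_{i+1}/\delta_i)\,|\Delta_i|$. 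Feeding this into an induction with base case $|\Delta_1|=8^{0}(\delta_1/\delta_1)|\Delta_1|$ yields $|\Delta_i|\le 8^{i-1}(\delta_i/\delta_1)|\Delta_1|$ for every $i$, since $(8\delta_{i+1}/\delta_i)\cdot 8^{i-1}(\delta_i/\delta_1)=8^{i}(\delta_{i+1}/\delta_1)$.

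For the running time I would build the tower level by level. To obtain $\Delta_{j+1}$ from $\Delta_j$, for each $\delta\in\Delta_j$ I enumerate $k$ from $\lceil\delta_{j+1}/\delta\rceil$ to $\lfloor 8\delta_{j+1}/\delta\rfloor$ and emit $k\delta$; the count above shows this produces at most $(8\delta_{j+1}/\delta_j)|\Delta_j|\le 8^{j}(\delta_{j+1}/\delta_1)|\Delta_1|$ numbers counted with multiplicity, and a single sort removes duplicates at the cost of a logarithmic factor. Running $j=1,\dots,i-1$, the total time is $\sum_{j=1}^{i-1}\tilde O\!\big(8^{j}(\delta_{j+1}/\delta_1)|\Delta_1|\big)$; consecutive summands grow by a factor $8\delta_{j+2}/\delta_{j+1}\ge 16$, so this geometric sum is dominated by its last term $\tilde O(8^{i-1}(\delta_i/\delta_1)|\Delta_1|)$, which matches the claimed bound.

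I do not expect a genuine obstacle. The only two points that need care are (i) combining $\delta\ge\delta_i$ with $\delta_{i+1}/\delta_i\ge 2$ to absorb the stray $+1$ into the constant $8$, keeping the recurrence clean; and (ii) observing that distinct pairs $(k,\delta)$ may produce the same multiple, which is precisely why the per-level construction needs a sort (hence the $\tilde O$) rather than a plain linear scan.
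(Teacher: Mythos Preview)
Your proposal is correct and follows essentially the same argument as the paper: both bound the number of multiples of each element of $\Delta_{i-1}$ landing in $[\delta_i,8\delta_i]$ by $8\delta_i/\delta_{i-1}$, sum over elements, and induct; the paper simply uses the cruder count $\lfloor 8\delta_i/x\rfloor\le 8\delta_i/x$ rather than tracking the ``$+1$'', and dismisses the running-time claim as ``straightforward'' where you spell out the geometric sum. No meaningful difference in approach.
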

\begin{proof}
For $2\le i\le d$, we have 
\begin{align*}
|\Delta_i|= \Big\lvert [\delta_i,8\delta_{i}] \cap \bigcup_{k\in \Z}k\Delta_{i-1}\Big \rvert 
\le \sum_{x\in \Delta_{i-1}} 8\delta_i/x \le |\Delta_{i-1}| 8\delta_i/\delta_{i-1}.
\end{align*}
The proof of size upper bounds follows by induction. Elements of $\Delta_i$ can be generated straightforwardly within the time bound.
\end{proof}

\begin{lemma}
\label{lemmanumbertheory}
Let $T_1,T_2,\dots,T_d$ be positive real numbers satisfying $T_1\ge 2$ and $T_{i+1}\ge 2T_{i}$. There exist at least $T_d\big /(\log T_d)^{O(d)}$ integers $t$ satisfying the following condition: $t$ can be written as a product of integers $t=n_1n_2\cdots n_d$, such that $n_1n_2\cdots n_i\in (T_i/2,T_i]$ for every $1\le i \le d$. 
\end{lemma}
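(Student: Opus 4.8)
The plan is to count only those integers $t$ admitting a factorization $t=n_1n_2\cdots n_d$ in which \emph{every} factor $n_i$ is prime. Restricting to prime factors gains two things at once: many tuples $(n_1,\dots,n_d)$ can be produced greedily by choosing $n_i$ from a suitable interval, and — crucially — unique factorization makes it easy to bound how many tuples map to the same product.

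\emph{Counting prime tuples.} Let $S_k$ be the set of tuples $(n_1,\dots,n_k)$ of primes with $n_1\cdots n_j\in(T_j/2,T_j]$ for all $j\le k$. I would invoke the Chebyshev-type bound that there is an absolute constant $c_0>0$ such that every interval $(y,2y]$ with $y\ge 1$ contains at least $c_0\,y/\ln(2y)$ primes (for $y$ in a bounded range this is just ``$\ge 1$''). Given $(n_1,\dots,n_{k-1})\in S_{k-1}$ with product $P\in(T_{k-1}/2,T_{k-1}]$, an extension $n_k$ must be a prime in $(T_k/(2P),T_k/P]$; since $T_k\ge 2T_{k-1}$ this interval is exactly of the doubling form $(y,2y]$ with $y=T_k/(2P)\ge 1$, so the number of admissible $n_k$ is at least $c_0\cdot\frac{T_k/(2P)}{\ln(T_k/P)}\ge \frac{c_0 T_k}{2\,T_{k-1}\ln(2T_d)}$, using $P\le T_{k-1}$ and $T_k/P\le 2T_d$. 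Summing over $S_{k-1}$ gives $|S_k|\ge |S_{k-1}|\cdot \frac{c_0 T_k}{2\,T_{k-1}\ln(2T_d)}$, and the same estimate applied to $(T_1/2,T_1]$ gives $|S_1|\ge \frac{c_0 T_1}{2\ln(2T_d)}$. Multiplying these bounds, the ratios $T_k/T_{k-1}$ telescope:
\[
|S_d|\ \ge\ T_d\left(\frac{c_0}{2\ln(2T_d)}\right)^{\!d}.
\]
Since $T_d\ge 2^{d-1}T_1\ge 2^{d}$ we have $\log T_d\ge d$, which lets us absorb the factor $(2/c_0)^d$ and rewrite $\ln(2T_d)=O(\log T_d)$, so that $|S_d|\ge T_d/(\log T_d)^{O(d)}$.

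\emph{Bounding collisions and concluding.} If two tuples in $S_d$ have the same product $t$, then, all their entries being prime, unique factorization forces their multisets of entries to coincide; hence the second tuple is a permutation of the first, and at most $d!$ members of $S_d$ share any given product. Therefore the number of distinct products arising from $S_d$ is at least $|S_d|/d!$, and each such product $t$ is an integer carrying a factorization (into primes, a fortiori into integers) with the required partial-product property. Finally $d!\le d^d\le(\log T_d)^d$ (again using $d\le\log T_d$), so $|S_d|/d!\ge T_d/(\log T_d)^{O(d)}$, which is the claimed bound.

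\emph{Main obstacle.} The delicate point is making sure the prime-counting estimate is applied \emph{validly at every level}: this is precisely what the hypothesis $T_{i+1}\ge 2T_i$ buys, since it guarantees that each target interval $(T_i/(2P),T_i/P]$ is nonempty and of the form $(y,2y]$ with $y\ge 1$, so that Chebyshev's lower bound applies uniformly (the one mildly annoying case being small $y$, covered by the ``$\ge 1$'' clause). Everything else — verifying $T_k/P\le 2T_d$ and $\log T_d\ge d$, and checking that the constants $c_0$, $2$, and $d!$ all fit inside $(\log T_d)^{O(d)}$ — is routine bookkeeping.
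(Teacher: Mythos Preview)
Your proposal is correct and follows essentially the same argument as the paper: count $d$-tuples of primes whose partial products land in the intervals $(T_i/2,T_i]$, lower-bound the count level by level using a Chebyshev-type estimate $\pi(2y)-\pi(y)\ge c_0 y/\log(2y)$, then divide by $d!$ to pass from tuples to distinct products, absorbing $d!$ and the constants via $d\le\log T_d$. The only cosmetic difference is that the paper uses $\log T_k$ at each level (proving $N_k\ge T_k/(C\log T_k)^k$ by induction) while you uniformly use $\ln(2T_d)$; both yield the same final bound.
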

The proof of Lemma \ref{lemmanumbertheory} is deferred to Appendix \ref{apdnumbertheory}. Lemma \ref{lemmanumbertheory} helps us prove the following fact, which is a multi-level extension of \cite[Lemma 6]{chan2018approximation}.

\begin{lemma}
\label{lemmadense}
For any parameters $\delta_1,\dots,\delta_d$ satisfying condition (\ref{eqnparacondition}),  
there exists a base set $\Delta_1$ of size $\frac{\delta_1}{\eps}\cdot (\log \eps^{-1})^{O(d)}$, such that every $p\in [1,2]$ can be approximated by a $\Delta_d$-multiple with additive error $O(\eps)$, where $\Delta_d$ is the top set generated by $\Delta_1$.

This base set $\Delta_1$ can be constructed in $\tilde O(\eps^{-1}\delta_1^{-1})$ time deterministically.
\end{lemma}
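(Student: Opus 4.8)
The plan is to mimic Chan's single-level argument but drive it with the multi-level counting bound of Lemma \ref{lemmanumbertheory}. Set $T_i := \delta_i/\eps$ for each $i$; condition (\ref{eqnparacondition}) guarantees $T_1 \ge 2$ and $T_{i+1} \ge 2T_i$, so Lemma \ref{lemmanumbertheory} applies. It produces a set $S$ of at least $T_d/(\log T_d)^{O(d)} = (\delta_d/\eps)/(\log\eps^{-1})^{O(d)}$ integers $t$, each with a factorization $t = n_1 n_2 \cdots n_d$ where the prefix products satisfy $n_1\cdots n_i \in (T_i/2, T_i]$. First I would take the base set to be
\[
\Delta_1 := \Bigl\{ \tfrac{1}{t_0} : t_0 \in (T_1/2, T_1]\cap\Z,\ t_0 \text{ is the length-1 prefix product of some } t\in S \Bigr\}
\]
— more simply, $\Delta_1 := \{1/n_1 : t\in S\}$, which lies in $[1/T_1, 2/T_1] = [\eps/\delta_1, 2\eps/\delta_1]$, rescaled into $[\delta_1, 8\delta_1]$ by a fixed constant. (One can also just take $\Delta_1$ to be all of $\{1/k : k\in(T_1/2,T_1]\cap\Z\}$; its size is $O(T_1) = O(\delta_1/\eps)$, within the claimed bound, and this is cleaner.) Its size is $\tilde O(\delta_1/\eps) = \frac{\delta_1}{\eps}(\log\eps^{-1})^{O(d)}$ as required, and it is clearly constructible in $\tilde O(\eps^{-1}\delta_1^{-1})$ time.

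Next I would trace what the set-tower recurrence does to these reciprocals. Starting from $\Delta_1 \ni 1/n_1$, the element $1/(n_1 n_2)$ is an integer multiple (namely the $n_2$-fold... no — it is $1/n_2$ times) of $1/n_1$, hence $n_2 \cdot \tfrac{1}{n_1 n_2} = \tfrac{1}{n_1}$, so $\tfrac{1}{n_1 n_2} \in \bigcup_k k\Delta_1$; since $n_1 n_2 \in (T_2/2, T_2]$ we get $\tfrac{1}{n_1 n_2} \in [\eps/\delta_2, 2\eps/\delta_2] \subset [\delta_2, 8\delta_2]$ after the same constant rescaling, so $\tfrac{1}{n_1 n_2} \in \Delta_2$. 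Inductively, $\tfrac{1}{n_1\cdots n_i} \in \Delta_i$ for every $i$, and in particular $\tfrac{1}{t} = \tfrac{1}{n_1\cdots n_d} \in \Delta_d$ for every $t \in S$. So $\Delta_d$ contains the reciprocals of all $|S| = (\delta_d/\eps)/(\log\eps^{-1})^{O(d)}$ integers in $S$, and these reciprocals all lie in an interval of length $O(\eps/\delta_d)$ around, roughly, $\eps/\delta_d$.

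Finally I would run the standard pigeonhole/continued-fraction-free density argument: the set $\{1/t : t\in S\}$ consists of $\ge (\delta_d/\eps)/(\log\eps^{-1})^{O(d)}$ points inside $[\delta_d/2, \delta_d]$ (after rescaling so that $\Delta_d\subset[\delta_d,8\delta_d]$; here I am being schematic about the constant $8$), so two consecutive points are within $O\bigl(\eps \cdot (\log\eps^{-1})^{O(d)}\bigr)$ of each other — wait, that gap is too large. The correct move, exactly as in \cite[Lemma 6]{chan2018approximation}, is: for a target $p \in [1,2]$ and a candidate $\delta = 1/t \in \Delta_d$, the best $\delta$-multiple approximating $p$ has error at most $\delta$, so I need some $t\in S$ with $t \ge 1/\eps$ *and* such that $\lfloor p/\delta\rfloor\delta$ is within $O(\eps)$ of $p$; since every $t\in S$ satisfies $t \in (T_d/2, T_d] = (\delta_d/(2\eps), \delta_d/\eps]$ and $\delta_d \le 1/8$, we have $\delta = 1/t < 2\eps/\delta_d$, which is only $O(\eps)$ when $\delta_d = \Omega(1)$ — not in general. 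So the rounding error from a single $\Delta_d$-multiple is $\Theta(\delta_d \cdot \eps/\delta_d) = \Theta(\eps)$... I realize the clean statement is just: each $\delta\in\Delta_d$ has $\delta = \Theta(\eps/\delta_d)\cdot\delta_d$, no — each $\delta\in\Delta_d$ satisfies $\delta \le 8\delta_d$ and $\delta \ge \delta_d$, so a single multiple gives error up to $8\delta_d$, which is way more than $\eps$. The resolution, which is the actual content, is that we do not use one multiple: because $S$ is dense (size $\gg \delta_d/\eps$ in a range of length $\Theta(\delta_d)$), the *union* of all arithmetic progressions $\{k\delta : k\in\Z, \delta\in\Delta_d\}$ is $O(\eps)$-dense in $[1,2]$ — for any $p$, pick $\delta = 1/t$ with $t\approx p/(\text{something})$; formally, among $t \in S$ the values $\{\,p\cdot t \bmod 1\,\}$ are equidistributed enough that some $t$ makes $pt$ within $1/2$ of an integer $k$, giving $|p - k/t| \le 1/(2t) \le \eps/\delta_d \cdot O(1) = O(\eps)$ using $\delta_d \ge$ const...

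**The main obstacle.** The genuinely delicate step — and the one I expect to be the crux — is precisely this last density argument: showing that $O(\eps)$-approximability follows from $|S| \gtrsim \delta_d/\eps$, which requires the pigeonhole argument of \cite[Lemma 6]{chan2018approximation} adapted so that the relevant multiples land in $[1,2]$ with the right granularity, and tracking that the $(\log\eps^{-1})^{O(d)}$ loss in $|S|$ only inflates the implied constant in the $O(\eps)$ error by the same polylog factor (absorbed into the $(\log\eps^{-1})^{O(d)}$ bound on $|\Delta_1|$, not into the error). Everything else — defining $\Delta_1$, verifying the tower recurrence propagates the reciprocals, bounding $|\Delta_1|$ via Proposition \ref{propbound} and Lemma \ref{lemmanumbertheory}, and the $\tilde O(\eps^{-1}\delta_1^{-1})$ construction time — is routine bookkeeping on top of the facts already established in this section.
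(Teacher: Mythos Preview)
Your proposal has a genuine structural gap: the set-tower recurrence runs in the opposite direction from the one you need. Recall that $\Delta_{i+1}=[\delta_{i+1},8\delta_{i+1}]\cap\bigcup_{k\in\Z}k\Delta_i$, so elements of $\Delta_{i+1}$ are \emph{integer multiples} of elements of $\Delta_i$, hence larger. Your inductive step asserts $\tfrac{1}{n_1\cdots n_{i+1}}\in\Delta_{i+1}$ given $\tfrac{1}{n_1\cdots n_i}\in\Delta_i$, but $\tfrac{1}{n_1\cdots n_{i+1}}=\tfrac{1}{n_{i+1}}\cdot\tfrac{1}{n_1\cdots n_i}$ is a \emph{fractional} multiple, not an integer one; the identity $n_{i+1}\cdot\tfrac{1}{n_1\cdots n_{i+1}}=\tfrac{1}{n_1\cdots n_i}$ you wrote down goes the wrong way. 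Relatedly, with $T_i=\delta_i/\eps$ the reciprocal $1/n_1$ lands in $[\eps/\delta_1,2\eps/\delta_1]$, not in $[\delta_1,8\delta_1]$, and the ``fixed constant rescaling'' you invoke would actually have to be $\Theta(\delta_1^2/\eps)$ --- not a constant at all, and not uniform across levels.

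The paper's proof is organized quite differently, and the difference is not cosmetic. Rather than fixing $\Delta_1$ up front and then arguing that the resulting $\Delta_d$ is dense enough (the step you correctly flag as the obstacle, and which does not go through as stated), the paper works \emph{backward} from each target $p$. For a given $p$ one asks: which $x\in[\delta_1,4\delta_1]$, if placed in $\Delta_1$, would generate (via some chain $x,\,xk_1,\,xk_1k_2,\,\dots$ climbing the tower) a $\Delta_d$-element $y$ with a multiple within $\eps$ of $p$? Lemma~\ref{lemmanumbertheory} is applied with $T$-values of order $\delta_i/\delta_1$ and $p/\delta_1$ (not $\delta_i/\eps$) to show that the set $I_p$ of such acceptable $x$'s has measure at least $\eps/(\log\eps^{-1})^{O(d)}$ inside $[\delta_1,4\delta_1]$. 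Then a greedy \emph{hitting-set} computation picks $\Delta_1$ of size $\frac{\delta_1}{\eps}(\log\eps^{-1})^{O(d)}$ meeting every $I_p$. This sidesteps any equidistribution or pigeonhole claim about $\Delta_d$ itself; the ``density'' lives in the acceptable regions $I_p$ at the base level, and the hitting-set bound is what converts it into the size bound on $\Delta_1$.
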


\begin{proof}
Let $P=\{1,1+\eps,1+2\eps,\dots, 1+\lfloor \frac{1}{\eps}\rfloor \eps\}$. It suffices to approximate every value $p\in P$ with additive error $\eps$ using $\Delta_d$-multiples. 
For any $p\in P$ and $y\in \Delta_d \subset [\delta_d,8\delta_d]$,  $p$ is approximated with additive error $\eps$ by a multiple of $y$ if and only if  $y \in \bigcup_{j\in \Z} \left [\frac{p-\eps}{j},\frac{p}{j}\right ]$. 

  Our constructed base set $\Delta_1$ will satisfy $\Delta_1\subset [\delta_1,4\delta_1]$.
 Suppose integers $k_{1},k_{2},\dots,k_{d-1}$ satisfy \begin{equation}
 \label{eqnconda}
 k_{1}k_{2}\cdots k_{i-1} \in [\delta_i/\delta_{1}, 2\delta_i/\delta_1],\,\,  \text{ for every $2\le i \le d$}.                                                     \end{equation}
Then by Definition \ref{defsettower}, for any $x\in \Delta_1 \subset [\delta_1,4\delta_1]$, we have  $xk_{1}k_2\cdots k_{i-1} \in \Delta_i$ for every $2\le i \le d$.
  
  For any integer $j$ satisfying \begin{equation}
 \label{eqncondb}
k_1k_2\cdots k_{d-1}j \in [p/(4\delta_1),p/(2\delta_1)],
\end{equation}
the interval $[\frac{p-\eps}{k_{1}k_2\cdots k_{d-1}j }, \frac{p}{k_{1}k_2\cdots k_{d-1}j }]$ is contained in $[\delta_1,4\delta_1]$. 

We say an integer $K$ is \emph{good} for $p$, if $K$  can be expressed as a product of integers $k_1 k_2\cdots k_{d-1} j$ satisfying conditions (\ref{eqnconda}) and (\ref{eqncondb}). For such $K$, any $x\in [\frac{p-\eps}{K},\frac{p}{K}] \cap \Delta_1$  generates an element $y=xk_1k_2\cdots k_{d-1} \in \Delta_d \cap [\frac{p-\eps}{j},\frac{p}{j}]$ such that $p$ can be approximated by a multiple of $y$ with additive error $\eps$.

By Lemma \ref{lemmanumbertheory}, the number of good integers $K$ for $p$ is at least
\begin{equation*}
\frac{p/(4\delta_1)}{\big (\log (p/(4\delta_1))\big  )^{O(d)}} = \Omega \Big ( \frac{\delta_1^{-1}}{(\log \eps^{-1})^{O(d)}}\Big ),
\end{equation*}
and at most $p/(2\delta_1) = O(\delta_1^{-1})$, by (\ref{eqncondb}).
Using conditions (\ref{eqnconda}) and (\ref{eqncondb}) we can compute all these $K$'s by simple dynamic programming.
We denote the union of their  associated intervals by \begin{equation}
I_p \,\,:= \bigcup_{\text{$K$ good for $p$}}\left [\frac{p-\eps}{K},\frac{p}{K}\right ]  \,\,\,\,\subset \,\big [\delta_1,4\delta_1\big ].   
\end{equation} 
Note that these intervals are disjoint since $p/(K+1) \le (p-\eps)/K$, so the total length of $I_p$ can be lower-bounded as   
\begin{equation}
\lambda (I_p) \ge \frac{\delta_1^{-1}}{(\log \eps^{-1})^{O(d)}} \cdot \frac{p-(p-\eps)}{\max K}  \ge \frac{\eps}{(\log \eps^{-1})^{O(d)}}.
\end{equation}

We have seen that $p$ is approximated by a $\Delta_d$-multiple with additive error $\eps$ as long as $\Delta_1 \cap I_p \neq \emptyset$. 
We compute $I_p$ for every $p\in P$, and use the standard greedy algorithm (for Hitting Set problem) to construct a base set $\Delta_1\subset [\delta_1,4\delta_1]$ which intersects with every $I_p$: in each round we find 
a point $x\in [\delta_1,4\delta_1]$ that hits the most $I_p$'s, include $x$ into $\Delta_1$, and remove the $I_p$'s that are hit by $x$. In each round the number of remaining $I_p$'s decreases by 
\begin{equation*}
s:=\frac{\min_p \lambda(I_p)}{4\delta_1-\delta_1} \ge \frac{\eps/\delta_1}{(\log \eps^{-1})^{O(d)}},
\end{equation*}
so the solution size $|\Delta_1|$ is upper-bounded by 
\begin{equation*}
 1+ \log_{1/(1-s)}|P| = O\left (\frac{\log |P| }{s}\right ) =\frac{\delta_1}{\eps} (\log\eps^{-1})^{O(d)}.
\end{equation*}
To implement this greedy algorithm, we use standard data structures (for example, segment trees) that support finding $x$ which hits the most intervals, reporting an interval hit by $x$, removing an interval, all in logarithmic time per operation. The number of operations is bounded by the total number of small intervals, so the running time is at most $\tilde O( |P| \cdot \frac{p}{2\delta_1}) = \tilde O(\delta_1^{-1}\eps^{-1})$. 
\end{proof}

The following lemma evenly partitions the base set $\Delta_1$ into $r$ subsets $\Delta_1^{(1)},\dots,\Delta_1^{(r)}$, and partitions the profit values $P=\{p_1,\dots,p_m\}$ into $P^{(1)}\cup \dots \cup P^{(r)}$, so that $P^{(j)}$ can be approximated by $\Delta_d^{(j)}$-multiples. An additional requirement is that $P^{(1)},\dots,P^{(r)}$ should have size $O(|P|/r)$ each.
\begin{lemma}
\label{lemmagrouping}
Let $\delta_1,\dots,\delta_d$ be parameters satisfying condition (\ref{eqnparacondition}). 
Let $P = \{p_1,\dots,p_m\}  \subset [1,2]$ with $m= O(\eps^{-1})$. 
 Given a positive integer parameter $r =O(\min \{\frac{\delta_1}{\eps},m \})$, there exist $r$ base sets $\Delta_{1}^{(1)},\Delta_{1}^{(2)},\dots,\Delta_{1}^{(r)}$ each of size $\frac{\delta_1}{\eps r} \cdot (\log \eps^{-1})^{O(d)}$, and a partition of $P = P^{(1)} \cup  P^{(2)} \cup\cdots \cup P^{(r)}$ each of size $O(m/r)$, such that for every $1\le j \le r$, every $p \in P^{(j)}$ can be approximated by a $\Delta_{d}^{(j)}$-multiple with additive error $O(\eps)$, where $\Delta_{d}^{(j)}$ is the top set  generated by $\Delta_1^{(j)}$. 

These $r$ base sets and the partition of $P$ can be computed by a  deterministic algorithm in $\tilde O(\eps^{-2}/r)$ time .
\end{lemma}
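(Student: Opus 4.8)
The plan is to re-run the greedy hitting-set construction from the proof of Lemma~\ref{lemmadense}, but with a \emph{capacity} on how many profit values a single chosen base-set element is allowed to serve, and then to redistribute the chosen elements into $r$ balanced groups by a sort-and-round-robin step. First I would build, for every $p\in P$, the interval family $I_p\subset[\delta_1,4\delta_1]$ exactly as in the proof of Lemma~\ref{lemmadense}: it is a disjoint union of at most $O(\delta_1^{-1})$ intervals, it satisfies $\lambda(I_p)\ge \eps/(\log\eps^{-1})^{O(d)}$, and any $x\in I_p$ generates (in the sense of Definition~\ref{defsettower}) an element $y$ of the top set of the singleton tower with base $\{x\}$ such that $p$ is approximable by a multiple of $y$ to additive error $\eps$; all these families together are produced in $\tilde O(m\delta_1^{-1})=\tilde O(\eps^{-1}\delta_1^{-1})$ time. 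Now run a greedy hitting set over $\bigcup_p I_p$, modified so that when the chosen point $x$ stabs $H$ still-uncovered families, it only ``claims'' (marks covered, and records the assignment $p\mapsto x$) up to $\lceil m/r\rceil$ of them, leaving any excess for later rounds. This yields a list $x_1,\dots,x_R$ of base-set elements together with a map $\pi\colon P\to\{1,\dots,R\}$ with $x_{\pi(p)}\in I_p$ and $v_i:=|\pi^{-1}(i)|\le\lceil m/r\rceil$ for every $i$.

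The next step is to bound $R$. Call a round \emph{type-A} if the chosen point was allowed to claim its full quota $\lceil m/r\rceil$, and \emph{type-B} otherwise (so in a type-B round all $H<\lceil m/r\rceil$ stabbed families are claimed). Exactly $m$ families are claimed over the whole run, each type-A round claims $\lceil m/r\rceil$, so there are at most $r$ type-A rounds. In any round the best stabbing point hits at least $s:=\min_p\lambda(I_p)/(3\delta_1)=\Omega\!\big(\eps/(\delta_1(\log\eps^{-1})^{O(d)})\big)$ times the current number $N$ of uncovered families (the same averaging over $[\delta_1,4\delta_1]$ as in Lemma~\ref{lemmadense}, valid because uncovered families keep their full length); hence in a type-B round $N$ drops by a factor $\le 1-s$. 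Since $N$ is non-increasing throughout and starts at $m=O(\eps^{-1})$, a potential argument on $\ln N$ bounds the type-B rounds by $O(s^{-1}\log m)=\frac{\delta_1}{\eps}(\log\eps^{-1})^{O(d)}$, so $R=r+\frac{\delta_1}{\eps}(\log\eps^{-1})^{O(d)}=\frac{\delta_1}{\eps}(\log\eps^{-1})^{O(d)}$ using $r=O(\delta_1/\eps)$. Implemented with the same segment-tree structures as in Lemma~\ref{lemmadense} (maximum stabbing number under logarithmic-time insert/delete), the greedy runs in $\tilde O(\sum_p|I_p|)=\tilde O(\eps^{-1}\delta_1^{-1})$ time, which is $\tilde O(\eps^{-2}/r)$ since $r=O(\delta_1/\eps)$.

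Finally I would repartition: sort the chosen elements so that $v_1\ge v_2\ge\dots\ge v_R$, put $x_i$ into $\Delta_1^{(j)}$ with $j\equiv i\pmod r$, and set $P^{(j)}:=\{p: x_{\pi(p)}\in\Delta_1^{(j)}\}$. Round-robin assignment gives $|\Delta_1^{(j)}|\le\lceil R/r\rceil=\frac{\delta_1}{\eps r}(\log\eps^{-1})^{O(d)}$, and, because round-robin after sorting balances the per-group totals to within $\max_i v_i$, $|P^{(j)}|=\sum_{i\equiv j}v_i\le m/r+\lceil m/r\rceil=O(m/r)$ (here the hypothesis $r=O(m)$ is used so that $m/r=\Omega(1)$). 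Correctness is as in Lemma~\ref{lemmadense}: for $p\in P^{(j)}$ the element $x_{\pi(p)}\in\Delta_1^{(j)}\cap I_p$ generates some $y$ with $p$ approximable by a multiple of $y$ to additive error $\eps$, and since $x_{\pi(p)}\in\Delta_1^{(j)}$, Proposition~\ref{propmulti} puts $y$ in the top set $\Delta_d^{(j)}$ generated by $\Delta_1^{(j)}$; hence $p$ is approximated by a $\Delta_d^{(j)}$-multiple with additive error $O(\eps)$. The sort and redistribution take $\tilde O(R)=\tilde O(\delta_1/\eps)$ time, which is $\tilde O(\eps^{-2}/r)$ because $r\delta_1=O(\delta_1^2/\eps)=O(\eps^{-1})$ using $\delta_1\le 1/8$; so the whole procedure runs in $\tilde O(\eps^{-2}/r)$.

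The main obstacle I anticipate is the \emph{simultaneous} two-sided balancing. The unmodified greedy of Lemma~\ref{lemmadense} gives a small total number $R$ of base-set elements but no control over how many profit values one element serves, while a coarse per-element cap risks inflating $R$. The type-A/type-B accounting is exactly what reconciles ``small $R$'' with ``every $v_i\le\lceil m/r\rceil$'', and the sort-then-round-robin step is what converts this into the required per-group bounds on both $|\Delta_1^{(j)}|$ and $|P^{(j)}|$ at once; the rest is the bookkeeping that forces the polylog exponents and the inequalities $\eps^{-1}\delta_1^{-1},\ \delta_1/\eps=\tilde O(\eps^{-2}/r)$ to close under the hypothesis $r=O(\min\{\delta_1/\eps,m\})$.
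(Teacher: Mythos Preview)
Your argument is correct, but it takes a different route from the paper's proof. The paper uses Lemma~\ref{lemmadense} as a \emph{black box}: it first builds the single base set $\Delta_1$ of size $\frac{\delta_1}{\eps}(\log\eps^{-1})^{O(d)}$, computes the top set $\Delta_d$, and for each $p\in P$ finds (by binary search among $\Delta_d$-multiples in $[1,2]$) some $x\in\Delta_1$ that serves it, collecting these into buckets $Q_x$. It then \emph{splits} each $Q_x$ into chunks of size at most $s=\lceil m/D\rceil$ (where $D=\max\{r,|\Delta_1|\}$), observes there are at most $2D$ chunks in total, and merges chunks into $r$ groups of $\lceil 2D/r\rceil$ chunks each; $\Delta_1^{(j)}$ is just the set of $x$'s whose chunks landed in group $j$. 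The key point is that a single $x$ may appear in several $\Delta_1^{(j)}$'s, which is how the paper gets both size bounds simultaneously without ever controlling $|Q_x|$.

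Your approach instead \emph{re-opens} the greedy of Lemma~\ref{lemmadense} and adds a per-element capacity $\lceil m/r\rceil$, then distributes the resulting list via sort-and-round-robin. The type-A/type-B accounting to bound $R$ is clean, and your round-robin inequality $\sum_{i\equiv j}v_i\le v_1+m/r$ is correct (for each $k\ge 1$, $r\,v_{1+kr}\le \sum_{i=(k-1)r+2}^{kr+1}v_i$, so $\sum_{k\ge1}v_{1+kr}\le m/r$, and group $1$ dominates). The trade-off is modularity: the paper's proof is shorter because the hard work is already encapsulated in Lemma~\ref{lemmadense}, and the post-hoc chunking trick avoids re-running or re-analysing the greedy. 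Your version is more self-contained but duplicates effort; it also has the minor artefact that the same point $x$ may be chosen in several consecutive type-A rounds, which is harmless but worth noting explicitly.
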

\begin{proof}
First construct the base set $\Delta_1$ of size $\frac{\delta_1}{\eps} (\log \eps^{-1})^{O(d)}$ from Lemma \ref{lemmadense} in $\tilde O(\delta_1^{-1}\eps^{-1}) =\tilde O(\eps^{-2}/r)$ time, and compute the top set $\Delta_d$ that it generates. By Proposition \ref{propbound}, $|\Delta_d| \le 8^{d-1} \frac{\delta_d}{\delta_1}|\Delta_1|\le \frac{\delta_d}{\eps} (\log \eps^{-1})^{O(d)}$.
Generate and sort all $\Delta_d$-multiples in interval $[1,2]$. For every $p\in P$, use binary search  to find the $\Delta_d$-multiple $ky \le p\, (y\in \Delta_d)$ closest to $p$, and then add $p$ to the set $Q_x$, where $x\in \Delta_1$ is an element that generates $y$. ($Q_x$ is initialized as empty for every $x\in \Delta_1$.)  Then remove every $x$ with $Q_x=\emptyset$ from $\Delta_1$, so that $|\Delta_1|\le m$, while every $p\in P$ can still be approximated with $O(\eps)$ additive error by a $\Delta_d$-multiple.

Let $D:=\max\{r,|\Delta_1|\}$ and  let
 $s: = \lceil m/D\rceil$. For every $x\in \Delta_1$ we divide $Q_x$ evenly into small subsets each having size at most $s$. The total number of  these small subsets is 
\begin{equation*}
 \sum_{x\in \Delta_1}\lceil |Q_x|/s \rceil \le |\Delta_1| + \sum_{x \in \Delta_1} |Q_x|/s = |\Delta_1| + m/s \le 2 D.
\end{equation*}
We merge these small subsets into $r$ groups, each having at most $ \lceil 2D/r \rceil$ small subsets.  Then, define set $P^{(j)}$ as the union of small subsets from the $j$-th group, and let base set $\Delta_{1}^{(j)}$ contain $x\in \Delta_1$ if  any of these small subsets comes from $Q_x$. So $|\Delta_{1}^{(j)}| \le \lceil 2D/r \rceil =\frac{\delta_1}{\eps r} ( \log \eps^{-1})^{O(d)}$, and $|P^{(j)}| \le s\cdot  \lceil 2D/r\rceil = O(m/D) \cdot O(D/r) = O(m/r)$. 
\end{proof}

\subsection{Approximation using set towers }
\label{secapprox}
We first prove a slightly improved version of Corollary \ref{coroverynaive}. The only purpose of this lemma is to get rid of the $(\log \eps^{-1})^{O(\log \log \eps^{-1})}$ factor in the final running time.
\begin{lemma}
\label{naive} 
Let $f_1,\dots , f_m$ be monotone step functions with ranges contained in $ [0, B]$ for some $1\le B \le O(\eps^{-1})$. 
 If every $f_i$ is $p_i$-uniform and pseudo-concave for some
$p_i \in [1, 2]$, then we can compute a monotone step function that approximates 
$\min \{f_1 \oplus \dots \oplus f_m, B\}$ with factor $1+O(\eps)$  in $\tilde O(\eps^{-1}(Bm+\eps^{-1})/B^{0.01})$ time.
\end{lemma}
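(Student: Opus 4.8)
The plan is a two-regime split around a threshold, combined with the single-level ($d=1$) instance of the number-theoretic construction of Lemma~\ref{lemmadense}. Write $f := f_1 \oplus \dots \oplus f_m$ and put $B_0 := B^{0.99}$. First dispose of the trivial case: if $B \le (\log \eps^{-1})^{c}$ for a suitable constant $c$, then $B^{0.01}$ is polylogarithmic, so Corollary~\ref{coroverynaive}'s $\tilde O(\eps^{-1}Bm)$ already lies within the claimed $\tilde O(\eps^{-1}(Bm + \eps^{-1})/B^{0.01})$; hence assume $B$ is larger. The point is that we only need the full $1+O(\eps)$ factor where $f(x) < B_0$, whereas once $f(x) \ge B_0$ an additive error $O(\eps B_0)$ already yields factor $1+O(\eps)$. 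So we approximate $\min\{f, B_0\}$ with factor $1+O(\eps)$ and, separately, $\min\{f, B\}$ with additive error $O(\eps B_0)$, and return the pointwise maximum.

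The first task is Corollary~\ref{coroverynaive} with cap $B_0$ (legal since $1 \le B_0 \le B \le O(\eps^{-1})$): it gives a step function $g_{\mathrm{lo}}$ with factor $1+O(\eps)$ in time $\tilde O(\eps^{-1}B_0 m) = \tilde O(\eps^{-1}B^{0.99}m)$.

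For the second task, set $\eps' := \eps B_0/B = \eps B^{-0.01}$, instantiate Lemma~\ref{lemmadense} with $d=1$ and accuracy parameter $\eps'$ in place of $\eps$, and pick $\delta = \delta_1$ maximal subject to $|\Delta|\,\delta = O(\eps B_0)$, where $|\Delta| = \frac{\delta}{\eps'}(\log\eps^{-1})^{O(1)}$ by that lemma; this gives $\delta = \Theta\big(\eps B^{0.49}/(\log\eps^{-1})^{\Theta(1)}\big)$, and one checks $\eps' \le \delta \le 1/8$ holds in this regime of $B$ because $B \le O(\eps^{-1})$ and $\eps$ is small. The lemma outputs $\Delta \subset [\delta, 8\delta]$ such that every $p \in [1,2]$ is within $O(\eps')$ of a $\Delta$-multiple; round each $p_i$ down to such a $\Delta$-multiple $p_i' \in [p_i - O(\eps'),\, p_i]$ and replace $f_i$ by the corresponding $p_i'$-uniform function (still pseudo-concave, as its breakpoints are unchanged; that $p_i'$ may dip slightly below $1$ is immaterial to Lemma~\ref{improvedsmawk}). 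A short argument — for each $x$ choose a weight-$\le x$ subset $J$ with $\sum_{j\in J}p_j \ge \min\{f,B\}(x)$ and $|J| \le B+2$, so the value of $f_1'\oplus\dots\oplus f_m'$ at $x$ is at least $\sum_{j\in J}p_j' \ge \min\{f,B\}(x) - |J|\cdot O(\eps') = \min\{f,B\}(x) - O(\eps B_0)$, while trivially $f_1'\oplus\dots\oplus f_m' \le f$ — shows that $\min\{f_1'\oplus\dots\oplus f_m', B\}$ approximates $\min\{f,B\}$ with additive error $O(\eps B_0)$. Now Lemma~\ref{improvedsmawk} applied to the $f_i'$ with this $\Delta$ and $\delta$ returns $g_{\mathrm{hi}}$ approximating $\min\{f_1'\oplus\dots\oplus f_m', B\}$, hence $\min\{f,B\}$, with additive error $O(|\Delta|\delta) = O(\eps B_0)$, in time $\tilde O(Bm/\delta) = \tilde O(\eps^{-1}B^{0.51}m)$; building $\Delta$ and rounding the $p_i$ add $\tilde O((\eps')^{-1}\delta^{-1} + m) = \tilde O(\eps^{-2}B^{-0.48} + m)$.

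Finally output $h := \max\{g_{\mathrm{lo}}, g_{\mathrm{hi}}\}$, rounded down to powers of $1+\eps$ so its complexity is $\tilde O(\eps^{-1})$. Both terms are $\le f$, so $h \le \min\{f,B\}$; conversely, where $f(x) < B_0$ the term $g_{\mathrm{lo}}$ gives factor $1+O(\eps)$, and where $f(x) \ge B_0$ we have $\min\{f,B\}(x) \ge B_0$, so the $O(\eps B_0)$ additive error of $g_{\mathrm{hi}}$ is at most an $O(\eps)$ fraction of $\min\{f,B\}(x)$ — in either case $h$ gives factor $1+O(\eps)$. Summing running times yields $\tilde O(\eps^{-1}B^{0.99}m + \eps^{-1}B^{0.51}m + \eps^{-2}B^{-0.48} + m) = \tilde O(\eps^{-1}(Bm + \eps^{-1})/B^{0.01})$, using $B^{0.51} \le B^{0.99}$ and $B^{-0.48} \le B^{-0.01}$. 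The main obstacle is the joint tuning of $B_0$, $\eps'$ and $\delta$: the three error sources — rounding profits to $\Delta$-multiples, the additive error of Lemma~\ref{improvedsmawk}, and the cap $B_0$ of the low-value regime — must all be $O(\eps B_0)$, while $\delta$ must be large enough for Lemma~\ref{improvedsmawk} to be fast yet remain in the window $\eps' \le \delta \le 1/8$ demanded by Lemma~\ref{lemmadense}; verifying that window is exactly what forces the small-$B$ case to be split off at the start.
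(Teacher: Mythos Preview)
Your proof is correct and follows the same high-level scheme as the paper: a two-regime split by a threshold, with Corollary~\ref{coroverynaive} handling one side and Lemma~\ref{improvedsmawk} (fed by the $d=1$ case of Lemma~\ref{lemmadense}) handling the other, then take the pointwise maximum.

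The difference is entirely in the parameter tuning, and the paper's choice is noticeably simpler. Rather than placing the threshold high at $B_0 = B^{0.99}$ and then needing a scaled accuracy $\eps' = \eps B^{-0.01}$ plus the $|J|\le B+2$ counting argument to bound the total rounding error, the paper keeps accuracy $\eps$ in Lemma~\ref{lemmadense}, sets $\delta_1 = \eps B^{0.01}$, and lets Lemma~\ref{improvedsmawk} produce an additive error $e = \tilde O(|\Delta|\delta_1) = \tilde O(\eps B^{0.02})$; the low threshold is then just $B_L := e/\eps = \tilde O(B^{0.02})$. Because the rounding of each $p_i\in[1,2]$ is by $O(\eps)$, it is directly a $1+O(\eps)$ multiplicative error, so no per-item accumulation argument is needed, and the window condition $\eps \le \delta_1 \le 1/8$ holds automatically for all $B$ in range---so no separate small-$B$ case either. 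Your route works, but the paper's parameterization removes all three of those complications at no cost.
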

\begin{proof}

Using Lemma \ref{lemmadense} with parameters $d=1,\delta_1 = \eps B^{0.01}$, we get $\Delta \subset [\delta_1,8\delta_1]$ with size $|\Delta| \le \tilde O(\delta_1/\eps) = \tilde O(B^{0.01})$, in $\tilde O(\eps^{-2}/B^{0.01})$ time. Adjust every $p_i$ down to the nearest $\Delta$-multiple, and adjust $f_i$ accordingly. This introduces at most $1+O(\eps)$ error factor. Then use Lemma \ref{improvedsmawk}  to compute a monotone step function $f_H$ that approximates $\min \{f_1\oplus \cdots \oplus f_m, B\}$ with additive error $e=O(|\Delta|\delta_1) = \tilde O(\eps B^{0.02})$, in $\tilde O(B^{0.99}m\eps^{-1})$ time. 

Let $B_L:=e/\eps$, and use Corollary \ref{coroverynaive} to compute a monotone step function $f_L$ that approximates $\min \{f_1\oplus \cdots \oplus f_m, B_L\}$ with factor $1+O(\eps)$ in only $\tilde O(B_Lm\eps^{-1}) = \tilde O(B^{0.02} m\eps^{-1})$ time.

Since $f_H$ approximates $\min \{f_1\oplus \cdots \oplus f_m,B\}$ with additive error $\eps B_L$, $\max\{f_L,f_H\}$ is a $1+ O(\eps)$ approximation of $\min \{f_1\oplus \cdots \oplus f_m, B\}$. 
\end{proof}

Now we can approximate the profit function $\min \{B, \bigoplus_{p_k \in P^{(j)}} f_k\}$ for each group $P^{(j)}$, using the multi-level approach described in Section~\ref{secoutline}.
\begin{lemma}
\label{lemmaeachgroup}
  Let $f_1,\dots, f_m$ be given monotone step functions with ranges contained in $[0, B]$, and every $f_k$ is $p_k$-uniform and pseudo-concave for some $p_k \in [1, 2]$.  Assume  $m=  O(\eps^{-1})$, $\Omega(\eps^{-0.01}) \le B \le O(\eps^{-1})$.  Let $r$ be a given positive integer parameter with $r= O(m), r=o(B)$.
  
  We can set $d=O(\log \log \eps^{-1})$ and choose $d$ parameters $\delta_1,\dots,\delta_d$ satisfying condition (\ref{eqnparacondition}), such that the following statement holds:
  
  Let $P^{(1)}\cup \dots \cup P^{(r)}$ be the partition of set $P=\{p_1,\dots,p_m\}$ returned by the algorithm in Lemma \ref{lemmagrouping} with the above parameters.  Then for any $1\le j\le r$, using the base set $\Delta_{1}^{(j)}$ from Lemma \ref{lemmagrouping}, we can compute a monotone step function that approximates $\min \{B, \bigoplus_{p_k \in P^{(j)}} f_k\}$ with factor $1+O(\eps)$, in $(\eps^{-2}/r^{0.01}+m\eps^{-1} B^{1/2}/r^{3/2})(\log \eps^{-1})^{O(d)}$ time. 
\end{lemma}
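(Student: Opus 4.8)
The plan is to approximate $g_j:=\bigoplus_{p_k\in P^{(j)}}f_k$ by a multi-level scheme along the set tower $(\Delta_1^{(j)},\dots,\Delta_d^{(j)})$ generated by $\Delta_1^{(j)}$. First fix thresholds $B_0<B_1<\dots<B_d=B$ (whose values I pin down below). At the bottom, apply Corollary~\ref{coroverynaive} to get a function $h_0$ that approximates $\min\{g_j,B_0\}$ with factor $1+O(\eps)$ in $\tilde O(\eps^{-1}B_0|P^{(j)}|)$ time. For each level $i\in\{1,\dots,d\}$, regenerate $\Delta_i^{(j)}$ from $\Delta_1^{(j)}$ (Proposition~\ref{propbound}) and apply Lemma~\ref{improvedsmawk} with $\Delta=\Delta_i^{(j)}\subset[\delta_i,8\delta_i]$ to get a function $h_i$ that approximates $\min\{g_j,B_i\}$ with additive error $O(|\Delta_i^{(j)}|\delta_i)$ in $\tilde O(B_i|P^{(j)}|/\delta_i)$ time; the output is $\max\{h_0,\dots,h_d\}$, rounded down to powers of $1+\eps$ so that its complexity is $\tilde O(\eps^{-1})$. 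To make Lemma~\ref{improvedsmawk} applicable at every level, I first replace each $p_k$ by the nearby $\Delta_d^{(j)}$-multiple $\hat p_k$ furnished by Lemma~\ref{lemmagrouping} (after a standard $1+O(\eps)$ rescaling keeping $\hat p_k\in[1,2]$) and adjust $f_k$ to be $\hat p_k$-uniform; this preserves pseudo-concavity and costs only a global $1+O(\eps)$ factor, and by Proposition~\ref{propmulti} the value $\hat p_k$ is then automatically a $\Delta_i^{(j)}$-multiple for \emph{every} $i$.

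The choice of the $\delta_i$ falls out of the error bookkeeping: Proposition~\ref{propbound} together with $|\Delta_1^{(j)}|=\frac{\delta_1}{\eps r}(\log\eps^{-1})^{O(d)}$ from Lemma~\ref{lemmagrouping} gives $|\Delta_i^{(j)}|\delta_i=O(\delta_i^2/(\eps r))(\log\eps^{-1})^{O(d)}$, so I take $\delta_i:=\eps\sqrt{rB_{i-1}}/(\log\eps^{-1})^{O(d)}$, which makes the level-$i$ additive error $O(\eps B_{i-1})$. Correctness is then a short telescoping induction: each $h_i$ is a lower bound for $\min\{g_j,B_i\}$, and on $\{x:g_j(x)>B_{i-1}\}$ the inequalities $B_{i-1}<g_j(x)$ and $B_{i-1}\le B_i$ turn the additive error $O(\eps B_{i-1})$ into a $1+O(\eps)$ multiplicative error; hence $\max\{h_0,\dots,h_i\}$ approximates $\min\{g_j,B_i\}$ with factor $1+O(\eps)$ for each $i$ --- crucially with the \emph{same} $O(\eps)$ at every level, so the approximation error does not accumulate over the $d$ levels --- and in particular $\max\{h_0,\dots,h_d\}$ approximates $\min\{g_j,B\}$ with factor $1+O(\eps)$.

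For the running time, level $i\ge1$ costs $\tilde O(B_i|P^{(j)}|/\delta_i)=\tilde O(B_i m/(\eps r^{3/2}\sqrt{B_{i-1}}))\cdot(\log\eps^{-1})^{O(d)}$ (the cost of regenerating $\Delta_i^{(j)}$ is negligible), while the bottom level costs $\tilde O(\eps^{-1}B_0 m/r)$. I would take $B_0:=\max\{1,\lceil(\log\eps^{-1})^{O(d)}/r\rceil\}$: this is large enough that $\delta_1\ge\eps$, and --- using $B=\Omega(\eps^{-0.01})$ and $m=O(\eps^{-1})$ --- small enough that $B_0\le B$ and the bottom level runs within $(\eps^{-2}/r^{0.01})(\log\eps^{-1})^{O(d)}$ time. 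I would then define $B_1,\dots,B_{d-1}$ by letting $\log B_i$ be the affine function of $2^{-i}$ that equals $\log B_0$ at $i=0$ and $\log B$ at $i=d$; a short computation shows $B_i/\sqrt{B_{i-1}}$ is then a single value $c$ independent of $i$, and choosing $d=\Theta(\log\log\eps^{-1})$ with $2^{-d}\log\eps^{-1}=\Theta(1)$ makes $c=\Theta(\sqrt{B})$ while condition~(\ref{eqnparacondition}) stays satisfied (the ratios $B_i/B_{i-1}$ remain $\ge4$ and $\delta_d\le1/8$). Summing the $d$ level costs then gives $\tilde O(dmc/(\eps r^{3/2}))(\log\eps^{-1})^{O(d)}=(m\eps^{-1}B^{1/2}/r^{3/2})(\log\eps^{-1})^{O(d)}$, matching the claimed bound.

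The main obstacle is precisely this last balancing act: $B_0,\dots,B_d$ and $\delta_1,\dots,\delta_d$ must simultaneously respect condition~(\ref{eqnparacondition}), keep each level's additive error within $O(\eps B_{i-1})$, and equalize the $d$ level running times --- which is what forces $c=\Theta(\sqrt B)$ and hence $d=\Theta(\log\log\eps^{-1})$. Verifying that the admissible window for $d$ is non-empty and that $B_0$ can be pushed below $B$ is exactly where the hypotheses $B=\Omega(\eps^{-0.01})$, $m=O(\eps^{-1})$, and $r=o(B)$ get used.
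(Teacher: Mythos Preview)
Your overall architecture matches the paper's: build the set tower, approximate $\min\{g_j,B_i\}$ additively at each level $i\ge 1$ via Lemma~\ref{improvedsmawk}, handle the bottom level multiplicatively, and return the pointwise maximum. Your observation that a single rounding to a $\Delta_d^{(j)}$-multiple suffices for all levels (via Proposition~\ref{propmulti}) is actually a small simplification over the paper, which re-rounds at each level.

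There is, however, a real gap at the bottom level. Your choice $B_0=\max\{1,\lceil(\log\eps^{-1})^{O(d)}/r\rceil\}$ forces $\delta_1=\eps\sqrt{rB_0}/(\log\eps^{-1})^{O(d)}$ to be only $\Theta(\eps)$ up to polylog factors, so $\delta_1/\eps=(\log\eps^{-1})^{O(d)}$. But Lemma~\ref{lemmagrouping}, which the statement of Lemma~\ref{lemmaeachgroup} explicitly invokes with your chosen parameters, has the precondition $r=O(\delta_1/\eps)$; since $r$ is taken as large as $B^{1/3}2^{c\sqrt{\log(1/\eps)}}$ downstream (Lemma~\ref{lemmadependsonB}), this fails. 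To satisfy it you must have $\delta_1=\Omega(\eps r)$, hence $B_0=\Omega(r)\cdot(\log\eps^{-1})^{O(d)}$; the paper takes exactly $\delta_1=\eps r$ and $B_0=rt$ with $t=(\log\eps^{-1})^{O(d)}$. But once $B_0\approx r\cdot(\log\eps^{-1})^{O(d)}$, Corollary~\ref{coroverynaive} gives bottom-level time $\tilde O(\eps^{-1}B_0|P^{(j)}|)=\tilde O(\eps^{-1}\cdot r\cdot m/r)\cdot(\log\eps^{-1})^{O(d)}=\eps^{-2}(\log\eps^{-1})^{O(d)}$, a factor $r^{0.01}$ too large for the stated bound. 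This $r^{-0.01}$ is not cosmetic: summed over the $r$ groups in Lemma~\ref{lemmadependsonB} it is what keeps the first term at $r^{0.99}\eps^{-2}$ rather than $r\eps^{-2}$, and without it the final $2^{\Omega(\sqrt{\log(1/\eps)})}$ speedup is lost. The paper closes this gap by replacing Corollary~\ref{coroverynaive} with Lemma~\ref{naive}, a one-step bootstrap yielding an extra $B_0^{-0.01}$ factor; as the paper itself notes, ``the only purpose of this lemma is to get rid of the $(\log\eps^{-1})^{O(\log\log\eps^{-1})}$ factor in the final running time.'' Your proposal is missing exactly this ingredient.
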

\begin{proof}
We can assume $B\ge 4r$, and define $d$ to be the unique positive integer such that 
\begin{equation*}
 2^{2^{d-1}} \le \frac{\sqrt{B}}{\sqrt{r}} <2^{2^{d}} = 4^{2^{d-1}}.
\end{equation*}
Then $d= O(\log \log \frac{\sqrt{B}}{\sqrt{r}}) = O(\log\log \eps^{-1})$. Pick $\alpha\in [2,4)$ such that
\begin{equation}
\alpha^{2^{d-1}} = \frac{\sqrt{B}}{\sqrt{r}}.
\end{equation}
Define 
\begin{equation}
 \delta_{i} := \eps \sqrt{Br}\Big \slash {\alpha^{2^{d-i}}} ,\,\, 0\le i \le d.
\end{equation}
Then 
\begin{equation}
 \delta_d = \frac{\eps\sqrt{Br}}{\alpha},\, \delta_1 = \eps r
\end{equation}
Note that $\delta_d = \eps \sqrt{B}\cdot O(\sqrt{r}) = \eps \sqrt{B} \cdot o(\sqrt{B})=\eps \cdot o(B)= o(1)$. Hence the parameters $\delta_1,\dots,\delta_d$ satisfy condition (\ref{eqnparacondition}) for sufficiently small $\eps$.  

The base set $\Delta_{1}^{(j)}$ from Lemma \ref{lemmagrouping} has size $\frac{\delta_1}{\eps r} (\log \eps^{-1})^{O(d)}$. We compute the generated set tower $\Delta_{1}^{(j)}, \Delta_{2}^{(j)},\dots, \Delta_d^{(j)}$.
By Proposition \ref{propbound}, $|\Delta_{i}^{(j)}| \le  \frac{\delta_i}{\eps r} (\log \eps^{-1})^{O(d)}$. 
   Let\begin{equation}
 t:= \max\Big \{\alpha,\,\,\max_{j} |\Delta_{i}^{(j)}|\Big /\frac{\delta_i}{\eps r} \Big \}= (\log \eps^{-1})^{O(d)}           
	  \end{equation}
and define 
\begin{equation}
 B_i: = Bt\Big / \alpha^{2^{d-i}}, \,\,0\le i \le d.
\end{equation} 
Then $B\le B_d\le  B \cdot (\log \eps^{-1})^{O(d)}$, and it's easy to verify that 
 \begin{equation}
 |\Delta_{i}^{(j)}| \cdot \delta_i \le B_{i-1}\eps, \,\,\,(1\le i \le d). 
 \end{equation}

For every $1\le i\le d$,  adjust every $p_k \in P^{(j)}$ down to the nearest $\Delta_{i}^{(j)}$-multiple and adjust $f_k$ accordingly, which introduces a $1+O(\eps)$  error factor.
Then use Lemma \ref{improvedsmawk} to obtain a monotone step function $g_i$ which approximates $\min \{\bigoplus_{p_k\in P^{(j)}} f_k, B_i\}$  with additive error $O(|\Delta_{i}^{(j)}|\delta_i) = O( \eps B_{i-1})$ in $\tilde O(|P^{(j)}|B_i/\delta_{i})$ time. 

Then we use Lemma \ref{naive} to obtain a monotone step function $g_0$ which  approximates $\min \{\bigoplus_{p_k\in P^{(j)}} f_k, B_{0}\}$ with $1+O(\eps)$ factor, in $\tilde O(\eps^{-1}(|P^{(j)}|B_{0}+\eps^{-1})B_0^{-0.01} )$ time. Notice that $B_0=rt$.

Finally, $\max\{g_0,g_1,g_2,\dots,g_d\}$ is a  $1+O(\eps)$ approximation of $\min\{\bigoplus_{p_k\in P^{(j)}}f_k,B_d\}$, where $B_d\ge B$. Overall running time is 
\begin{align*}
& \tilde O(\eps^{-1}(|P^{(j)}|B_{0}+\eps^{-1})B_0^{-0.01} )+\sum_{1\le j\le d}\tilde O(|P^{(j)}|B_j/\delta_j) \\
= \ & \tilde O\big (\eps^{-1}(\frac{m}{r} \cdot (rt) + \eps^{-1})(rt)^{-0.01} \big  ) + d\cdot \tilde O\big ( \frac{m}{r} B_d/\delta_d\big ) \\
= \ & (\eps^{-2}/r^{0.01}+m\eps^{-1} B^{1/2}/r^{3/2})(\log \eps^{-1})^{O(d)}.
\end{align*}
\end{proof}

Now we merge the results from all $r$ groups, and obtain an approximation of the final result $\min\{f_1\oplus\dots \oplus f_m,B\}$.
\begin{lemma}
\label{lemmadependsonB}
  Let $f_1,\dots, f_m$ be given monotone step functions with ranges contained in $[0, B]$, and every $f_k$ is $p_k$-uniform and pseudo-concave for some $p_k \in [1, 2]$.  Assume  $m=  O(1/\eps), \Omega(\eps^{-0.01})\le B \le O(\eps^{-1})$.   We can approximate $\min \{f_1\oplus \dots \oplus f_m,B\}$ with factor $1+O(\eps)$ in $O( \eps^{-2} B^{1/3}/2^{\Omega(\sqrt{\log(1/\eps)})} )$ time.
\end{lemma}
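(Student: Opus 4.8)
The plan is to reduce to the per-group problem of Lemma~\ref{lemmaeachgroup} and then merge across groups by the divide-and-conquer Lemma~\ref{dc}, leaving a single parameter $r$ to be optimized at the end. Concretely: invoke Lemma~\ref{lemmaeachgroup} with the given $f_1,\dots,f_m$, $B$, and a parameter $r$; this fixes $d=O(\log\log\eps^{-1})$ and the parameters $\delta_1,\dots,\delta_d$, and (through Lemma~\ref{lemmagrouping}) a partition $P=P^{(1)}\cup\dots\cup P^{(r)}$ with base sets $\Delta_1^{(j)}$. For each $j$ it produces a monotone step function $g_j$ that $(1+O(\eps))$-approximates $\min\{B,\bigoplus_{p_k\in P^{(j)}}f_k\}$ in $(\eps^{-2}/r^{0.01}+m\eps^{-1}B^{1/2}/r^{3/2})(\log\eps^{-1})^{O(d)}$ time, and by the rounding remark in Section~\ref{secprelim} each $g_j$ may be taken of complexity $\tilde O(\eps^{-1})$. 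Since $f_1\oplus\dots\oplus f_m=\bigoplus_{j=1}^{r}\big(\bigoplus_{p_k\in P^{(j)}}f_k\big)$ and $\min\{B,\bigoplus_j\min\{B,h_j\}\}=\min\{B,\bigoplus_j h_j\}$ for nonnegative monotone $h_j$, it suffices to merge $g_1,\dots,g_r$: applying Lemma~\ref{dc} to $g_1,\dots,g_r$ (total complexity $\tilde O(r\eps^{-1})$, ranges poly-bounded) and truncating the result at $B$ gives a $(1+O(\eps))$-approximation of $\min\{f_1\oplus\dots\oplus f_m,B\}$ in $O(r\eps^{-1})+\tilde O\big(\eps^{-2}r/2^{\Omega(\sqrt{\log\eps^{-1}})}\big)$ time.

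Summing the per-group cost over the $r$ groups, adding the merge cost, and using $m=O(\eps^{-1})$, the total running time is
\begin{equation*}
\big(\eps^{-2}r^{0.99}+\eps^{-2}B^{1/2}r^{-1/2}\big)(\log\eps^{-1})^{O(d)}+\tilde O\big(\eps^{-2}r/2^{\Omega(\sqrt{\log\eps^{-1}})}\big).
\end{equation*}
The three summands balance around $r\approx B^{1/3}$, which already yields $\tilde O(\eps^{-2}B^{1/3})$ up to the $(\log\eps^{-1})^{O(d)}$ overhead. To make the $(\min,+)$-convolution speedup the dominant effect I would instead take $r=\Theta\big(\min\{m,\,B^{1/3}\cdot 2^{C\sqrt{\log\eps^{-1}}}\}\big)$ for a small constant $C>0$ strictly below the speedup constant $c$ hidden in the $2^{\Omega(\sqrt{\log\eps^{-1}})}$ of Lemma~\ref{dc}; from $\Omega(\eps^{-0.01})\le B\le O(\eps^{-1})$ one checks $r=o(B)$ and $r=O(m)$, so the hypotheses of Lemmas~\ref{lemmagrouping} and~\ref{lemmaeachgroup} hold (and $\delta_1/\eps=r$ makes the remaining constraint of Lemma~\ref{lemmagrouping} automatic).

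With this choice every summand is $O(\eps^{-2}B^{1/3}/2^{\Omega(\sqrt{\log\eps^{-1}})})$: in the second summand $B^{1/2}r^{-1/2}=B^{1/3}\cdot 2^{-(C/2)\sqrt{\log\eps^{-1}}}$, whose saving beats the $(\log\eps^{-1})^{O(d)}=2^{O((\log\log\eps^{-1})^2)}=2^{o(\sqrt{\log\eps^{-1}})}$ overhead; in the first summand the factor $r^{-0.01}$ relative to $r$ is $B^{-\Omega(1)}\le 2^{-\Omega(\log\eps^{-1})}$ because $B\ge\Omega(\eps^{-0.01})$ and $r\ge B^{\Omega(1)}$, which swamps the $2^{O(\sqrt{\log\eps^{-1}})}$ from the $2^{C\sqrt{\log\eps^{-1}}}$ inflation and the polylog; and the third summand is $\eps^{-2}B^{1/3}\cdot 2^{(C-c)\sqrt{\log\eps^{-1}}}$ up to polylogs, a genuine $2^{-\Omega(\sqrt{\log\eps^{-1}})}$ saving since $C<c$. (When $m$ is too small to allow $r$ this large, take $r=\Theta(m)$; every summand only shrinks and, using $B\le\eps^{-1}$, stays below the target.) The main obstacle is exactly this bookkeeping step: the $d=O(\log\log\eps^{-1})$ levels of the construction each contribute a $(\log\eps^{-1})^{O(1)}$ factor, so one must verify the resulting $(\log\eps^{-1})^{O(\log\log\eps^{-1})}$ is absorbed by the $2^{\Omega(\sqrt{\log\eps^{-1}})}$ speedup on every term --- which works because $(\log\log\eps^{-1})^2=o(\sqrt{\log\eps^{-1}})$ and because the $r^{0.01}$ gap (together with the gap between $B^{0.33}$ and $B^{1/3}$) supplies a genuine polynomial cushion. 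Correctness is routine: $(\max,+)$-convolution preserves the $1+O(\eps)$ factor (Section~\ref{secprelim}), and the truncation identity above makes capping each group at $B$ harmless.
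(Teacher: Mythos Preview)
Your proof is correct and follows exactly the paper's approach: partition via Lemma~\ref{lemmagrouping}, approximate each group via Lemma~\ref{lemmaeachgroup}, merge the $r$ results via Lemma~\ref{dc}, and choose $r=B^{1/3}2^{c\sqrt{\log(1/\eps)}}$ for a small constant $c$. The only difference is that the paper avoids your small-$m$ case split by first padding with zero functions so that $m\ge \eps^{-1}$ (making $r=O(m)$ automatic); note also that your ``every summand only shrinks'' remark is not literally true for the displayed expression after substituting $m=O(\eps^{-1})$ (the term $\eps^{-2}B^{1/2}r^{-1/2}$ grows as $r$ decreases), though it is correct if you keep the original $m\eps^{-1}B^{1/2}r^{-1/2}$ and set $r=\Theta(m)$.
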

\begin{proof}
Assume $m\ge \eps^{-1}$, by adding  zero functions which do not change the answer.

Let $r = o(B)$ be a positive integer parameter to be determined later.

Using Lemma \ref{lemmagrouping} and Lemma \ref{lemmaeachgroup}, we can get a partition of $\{p_1,\dots,p_m\} = P^{(1)}\cup\dots \cup P^{(r)}$ and then get an $1+O(\eps)$ approximation of $\min\{\bigoplus_{p_k\in P^{(j)}}f_k,B\}$ for every $1\le j \le r$, in $r\cdot (\eps^{-2}/r^{0.01}+ m\eps^{-1}B^{1/2}/r^{3/2}) (\log \eps^{-1})^{O(d)} =(r^{0.99}+\sqrt{B/r}) \eps^{-2} (\log \eps^{-1})^{O(\log \log \eps^{-1})}$ overall time.

Then we use Lemma \ref{dc} to  merge all these $r$ functions in  $\tilde O((\frac{1}{\eps})^2 r/2^{\Omega(\sqrt{\log(1/\eps)})})$ time.

Setting $r = B^{1/3} 2^{c\sqrt{\log(1/\eps)}}$, where $c>0$ is some small enough constant, the total complexity is 
$$  O( \eps^{-2} B^{1/3}/2^{\Omega(\sqrt{\log(1/\eps)})} ). $$
\end{proof}

\begin{lemma}
\label{lemmadependsonm}
Let $I$ be a set of $m$ items with $p_i \in [1,2]$ for every $i\in I$, where $\Omega(\eps^{-2/3}) \le m \le  O(\eps^{-1})$.   One can approximate $f_I$ with factor $1+O(\eps)$ in 
$  O( \eps^{-{3/2}} m^{3/4}/2^{\Omega(\sqrt{\log(1/\eps)})} ) $
 time.
\end{lemma}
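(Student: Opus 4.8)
The plan is to combine the grouping machinery of Section~\ref{secextend} with the greedy/exchange argument used for Theorem~\ref{mainthm}, specialized to the regime where the items, rather than $1/\eps$, are the scarce resource.

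First I would pass to a normal form: round every $p_i$ down to a multiple of $\eps$ (a factor-$(1+\eps)$ loss), so the items fall into $m' = O(\min\{m,\eps^{-1}\}) = O(m)$ profit classes, each yielding a $p$-uniform pseudo-concave function computable by greedy; thus $f_I = \bigoplus_k f_k$ over $m'$ functions with ranges in $\{0\}\cup[1,2m]$. In particular $f_I \le 2m = O(\eps^{-1})$, and by Lemma~\ref{sortgreed} the simple greedy function already $(1+O(\eps))$-approximates $f_I$ wherever $f_I(x) = \Omega(\eps^{-1})$; so it suffices to $(1+O(\eps))$-approximate $\min\{f_I, B\}$ for $B = \Theta(m)$.

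For $m$ at the low end of the range this is handled directly by Lemmas~\ref{lemmagrouping}--\ref{lemmadependsonB}: partition the $m'$ functions into $r$ groups, approximate each capped group-function via Lemma~\ref{lemmaeachgroup}, and merge via Lemma~\ref{dc}. This is the proof of Lemma~\ref{lemmadependsonB} except that the balance of $r$ against the merge cost is recomputed with $B=\Theta(m)$ in place of $B=\eps^{-1}$; the optimum is $r = \Theta(\eps^{2/3}m)\cdot 2^{\Theta(\sqrt{\log\eps^{-1}})}$, giving $O(\eps^{-4/3}m/2^{\Omega(\sqrt{\log\eps^{-1}})})$ time, which is within the claimed bound for $m = O(\eps^{-2/3})$. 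For larger $m$ I would invoke the exchange argument: sort by unit profit, and with a gap parameter $\gamma$ (tuned to $m$) let $H$ be the top $\Theta(1/\gamma)$ items, $L$ the items of unit profit below $(1-\gamma)\min_{i\in H}(p_i/w_i)$, and $M$ the rest. An exchange lemma (established one unit-profit octave at a time and then combined) shows that in any solution of total profit $o(\eps^{-1})$ the $L$-items contribute only $O(1/\gamma)$ profit, so it is enough to approximate $\min\{f_L, O(1/\gamma)\}$ (via Lemma~\ref{lemmadependsonB} or the same grouping balance). The set $M$ has only $O(\gamma/\eps)$ distinct rounded unit profits, and each class is a scaled subset-sum instance, handled by the FPTAS of Kellerer et al.~\cite{kellerer2003efficient}; the set $H$ has only $\Theta(1/\gamma) \ll m$ items and is fed back into the grouping machinery. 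Merging the pieces by Lemma~\ref{dc} and tuning $\gamma$ to balance the $L$-, $M$-, $H$- and merge-costs gives the claimed $O(\eps^{-3/2}m^{3/4}/2^{\Omega(\sqrt{\log\eps^{-1}})})$ bound.

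The main obstacle is the exchange lemma together with making every parameter consistent. The exchange step substitutes high-unit-profit $H$-items for heavy $L$-items, so it needs enough $H$-weight of comparable granularity, which forces the octave-by-octave argument (the $O(\log\eps^{-1})$ octaves are re-merged at the end, absorbed into the polylogarithmic slack), and it needs care for capacities large enough that an optimal solution would exhaust $H$ -- there the objective is $\ge |H| = \Theta(1/\gamma)$, not necessarily $\Omega(\eps^{-1})$, so one must still argue the $L$-cap is harmless using $f_I \le 2m$. On top of this, the group count $r$, the gap $\gamma$, the size of $H$, and the $L$-cap must be chosen so that all four cost terms simultaneously land at $O(\eps^{-3/2}m^{3/4})$ (up to the $2^{\Omega(\sqrt{\log\eps^{-1}})}$ and polylog factors); the individual time bounds then follow the same patterns as in Lemmas~\ref{lemmaeachgroup} and~\ref{lemmadependsonB}.
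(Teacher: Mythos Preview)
You have taken a much more complicated route than the paper, and in doing so you missed the single observation that makes the lemma fall out directly. The paper's proof uses \emph{only} Lemmas~\ref{lemmagrouping}, \ref{lemmaeachgroup}, and~\ref{dc}: partition the $m$ profit values into $r$ groups, apply Lemma~\ref{lemmaeachgroup} to each group, and merge. The point you overlooked is that once $|P^{(j)}| = O(m/r)$, the profit function of each group is automatically bounded by $B := \max_j \sum_{p\in P^{(j)}} p = \Theta(m/r)$, since every $p_i\le 2$; hence $\bigoplus_{p_k\in P^{(j)}} f_k = \min\{B,\bigoplus_{p_k\in P^{(j)}} f_k\}$ with no actual capping. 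Plugging $B=\Theta(m/r)$ (not $B=\Theta(m)$) into the cost term $m\eps^{-1}B^{1/2}/r^{3/2}$ of Lemma~\ref{lemmaeachgroup} gives $m^{3/2}\eps^{-1}/r^{2}$ per group, hence $m^{3/2}\eps^{-1}/r$ total; balancing against the merge cost $\eps^{-2}r$ yields $r=m^{3/4}\eps^{1/2}\cdot 2^{c\sqrt{\log(1/\eps)}}$ and the claimed bound in one line. Your ``low-end'' computation with the global cap $B=\Theta(m)$ gives only $O(\eps^{-4/3}m)$, which matches the target only at $m=\Theta(\eps^{-2/3})$ and is strictly weaker throughout the rest of the range.

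The $H$--$M$--$L$ machinery you invoke to patch the high end is the content of Lemma~\ref{lemmaimproved}, which in the paper runs in the \emph{opposite} direction: it calls Lemma~\ref{lemmadependsonm} on $H$, not the other way around. Your version also has a genuine gap. With $|H|=\Theta(1/\gamma)$, the exchange argument (Lemma~\ref{greedy}) bounds the $L$-contribution only for $x\le W_H$, i.e.\ where $f_I(x)\le \sum_{h\in H}p_h = O(1/\gamma)$; for the range $1/\gamma \ll f_I(x) \le 2m$ you have neither the $L$-cap nor the greedy guarantee of Lemma~\ref{sortgreed} (which needs $f_I(x)=\Omega(\eps^{-1})$), and the remark that ``$f_I\le 2m$'' does not bridge this. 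Forcing the exchange to cover the whole range requires $|H|=\Omega(m)$, which collapses the split. So the detour is both unnecessary and, as sketched, does not close.
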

\begin{proof}
Let $f_1,\dots, f_m$ denote the profit functions of the $m$ items.

Let $r = o(m^{1/2})$ be a positive integer parameter to be determined later.  
Obtain a partition of $\{p_1,\dots,p_m\} = P^{(1)}\cup\dots \cup P^{(r)}$ using Lemma \ref{lemmagrouping}. Let $B:= \max_i \sum_{p\in P^{(i)}}p \le 2\max_i |P^{(i)}| = \Theta(m/r)$.
Then $r  = o(B)$. Use Lemma \ref{lemmaeachgroup} to get an $1+O(\eps)$ approximation of $\bigoplus_{p_k\in P^{(j)}}f_k = \min\{\bigoplus_{p_k\in P^{(j)}}f_k,B\}$ for every $1\le j \le r$, in $r\cdot (\eps^{-2}/r^{0.01}+m\eps^{-1}B^{1/2}/r^{3/2} ) (\log \eps^{-1})^{O(d)} =(\eps^{-2}r^{0.99}+m^{3/2} \eps^{-1}/r )(\log \eps^{-1})^{O(\log \log \eps^{-1})}$ overall time.

Then we use Lemma \ref{dc} to  merge all these $r$ functions in  $\tilde O((\frac{1}{\eps})^2 r/2^{\Omega(\sqrt{\log(1/\eps)})})$ time.
 
Setting $r = m^{3/4}\eps^{1/2} 2^{c\sqrt{\log(1/\eps)}}$, where $c>0$ is some small enough constant, the total complexity is 
$$  O( \eps^{-{3/2}} m^{3/4}/2^{\Omega(\sqrt{\log(1/\eps)})} ). $$
\end{proof}

\begin{corollary}[restated Theorem \ref{mainsmall}]
\label{small}
  For $n = O(\frac{1}{\eps})$, there is a deterministic $(1+\eps)$-approximation algorithm for 0-1 knapsack in $O\Big ( \big (n^{3/4}(\frac{1}{\eps})^{3/2} + (\frac{1}{\eps})^2\big ) /2^{\Omega(\sqrt{\log(1/\eps)})}\Big)$ time.
\end{corollary}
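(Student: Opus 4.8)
The plan is to reduce the general 0-1 knapsack instance to the restricted setting where all profits lie in $[1,2]$, and then combine Lemma~\ref{lemmadependsonm} with Chan's greedy argument from Section~\ref{secbiggreedy}. First I would apply the preprocessing from Section~\ref{secprelim}: discard items with tiny profit so that $\max_j p_j/\min_j p_j \le n/\eps = \mathrm{poly}(1/\eps)$ (using $n = O(1/\eps)$), then split the items into $O(\log(1/\eps))$ profit classes with $p_i \in [2^\ell, 2^{\ell+1})$, rescale each class to $[1,2]$, solve each separately, and merge the resulting $O(\log(1/\eps))$ profit functions via Lemma~\ref{dc} in $\tilde O(n + \eps^{-2}) = \tilde O(\eps^{-2})$ time, which is within the claimed bound. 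So it suffices to handle one class with $p_i \in [1,2]$ and $n = O(1/\eps)$ items.

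For a single class, I would follow the greedy split of Section~\ref{secbiggreedy}: set $B = \eps^{-1}$, run the sorting greedy of Lemma~\ref{sortgreed} to get $f_H$ approximating $f_I$ with additive error $O(\eps B) = O(1)$, and then it remains to $(1+O(\eps))$-approximate $\min\{f_I, B\}$ and take the max with $f_H$. Now I would round every $p_i$ down to a multiple of $\eps$ so there are only $m' = O(\eps^{-1})$ distinct profit values; for each distinct value the profit function is $p$-uniform and pseudo-concave and computable by greedy (Section~\ref{secprelim}). This puts us exactly in the hypothesis of Lemma~\ref{lemmadependsonm}, provided the number of items $m = n$ satisfies $\Omega(\eps^{-2/3}) \le m \le O(\eps^{-1})$. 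If $n = \Omega(\eps^{-2/3})$, Lemma~\ref{lemmadependsonm} gives an approximation of $f_I$ (note the lemma outputs the full function, not just $\min\{f_I, B\}$, so the greedy $f_H$ is not even needed here) in $O(\eps^{-3/2} n^{3/4} / 2^{\Omega(\sqrt{\log(1/\eps)})})$ time, which matches the first term of the claimed bound.

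The remaining case is $n = o(\eps^{-2/3})$, i.e.\ $n$ below the range of Lemma~\ref{lemmadependsonm}. Here I would pad the item set with dummy zero-weight, zero-profit items (or simply observe that fewer items only makes the problem easier) to reach $m = \Theta(\eps^{-2/3})$ items, and then apply Lemma~\ref{lemmadependsonm} with this $m$, giving running time $O(\eps^{-3/2} (\eps^{-2/3})^{3/4} / 2^{\Omega(\sqrt{\log(1/\eps)})}) = O(\eps^{-2} / 2^{\Omega(\sqrt{\log(1/\eps)})})$, which matches the second term $(\frac{1}{\eps})^2 / 2^{\Omega(\sqrt{\log(1/\eps)})}$ of the claimed bound. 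Adding a zero function does not change any $(\max,+)$-convolution, so correctness is immediate; alternatively one can just note $\Omega(\eps^{-2/3}) \le n$ can be assumed WLOG by padding. Summing over the $O(\log(1/\eps))$ profit classes multiplies the cost by a polylogarithmic factor, which is absorbed into $2^{\Omega(\sqrt{\log(1/\eps)})}$ (after adjusting the constant in the exponent) or stated as $\tilde O$; combined with the $\tilde O(\eps^{-2})$ merging cost this yields the bound of Theorem~\ref{mainsmall}.

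The only genuinely delicate point — and the one I would be most careful about — is bookkeeping the $(\log \eps^{-1})^{O(\log\log\eps^{-1})} = 2^{O((\log\log\eps^{-1})^2)}$ overhead that Lemma~\ref{lemmaeachgroup} (hence Lemma~\ref{lemmadependsonm}) carries, and checking that it, together with the extra $O(\log(1/\eps))$ factor from the profit-class decomposition, is dominated by $2^{\Omega(\sqrt{\log(1/\eps)})}$ so that the final bound can honestly be written with a $1/2^{\Omega(\sqrt{\log(1/\eps)})}$ savings; since $(\log\log\eps^{-1})^2 = o(\sqrt{\log\eps^{-1}})$ this goes through, but it is worth stating explicitly. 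The rest is routine assembly of already-established lemmas.
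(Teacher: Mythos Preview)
Your proposal is correct and follows the same route as the paper's (two-line) proof: divide into $O(\log(1/\eps))$ profit classes, apply Lemma~\ref{lemmadependsonm} to each, and merge via Lemma~\ref{dc}; you are in fact more careful than the paper in treating the boundary case $m<\eps^{-2/3}$ and the polylog bookkeeping, and your observations that the greedy split and the rounding to multiples of $\eps$ are unnecessary (since Lemma~\ref{lemmadependsonm} takes raw items and outputs all of $f_I$) are correct. One small fix: your dummy ``zero-profit'' items violate the hypothesis $p_i\in[1,2]$ of Lemma~\ref{lemmadependsonm}, so pad instead with items of profit $1$ and enormous weight (these never enter any feasible solution and leave $f_I$ unchanged).
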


\begin{proof}
 Divide the items into $O(\log \frac{n}{\eps})$ groups, each containing items with $p_i \in [2^{j},2^{j+1}]$ for some $j$. Use Lemma \ref{lemmadependsonm} to solve each group, and merge them using Lemma \ref{dc}. 
\end{proof}

\section{Main algorithm}
\label{secmain}
\subsection{A greedy lemma}
Our improved algorithm uses the following lemma, which gives an upper bound on the total profit of cheap items (with low $p_i/w_i$) in an optimal knapsack solution.

\begin{lemma}
\label{greedy}
Let $H,L$ be two subsets of items with $p_i\in [1,2]$.  Let $W = \sum_{h \in H} w_h$ and $q = \min_{h\in H} \frac{p_h}{w_h}$. Suppose $\max_{l \in L} \frac{p_l}{w_l} \le q(1-\alpha)$ for some $0<\alpha < 1$.  Let $f = f_H \oplus f_L, \tilde f = f_H \oplus \min\{\frac{2}{\alpha}, f_L\}$. Then for every $ x \le W$, $f(x)= \tilde f(x)$.
\end{lemma}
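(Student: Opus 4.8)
The plan is to prove that for every $x \le W$, any optimal knapsack solution realizing $f(x) = (f_H \oplus f_L)(x)$ can be assumed to use $L$-items of total profit at most $2/\alpha$; this immediately gives $f(x) \le \tilde f(x)$, and $\tilde f(x) \le f(x)$ is trivial since $\tilde f$ optimizes over a restricted family. So fix $x \le W$ and an optimal split $x = x_H + x_L$ with a set $A \subseteq H$ of weight $\le x_H$ and a set $B \subseteq L$ of weight $\le x_L$, attaining $f(x) = p(A) + p(B)$ where $p(\cdot)$ denotes total profit. Assume for contradiction $p(B) > 2/\alpha$.

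The key step is an exchange argument: since $p_l \le 2$ for each $l \in L$, we may remove $L$-items from $B$ one at a time until the removed profit is in the range $(2/\alpha - 2,\, 2/\alpha]$; call the removed set $B'$, so $p(B') \in (2/\alpha - 2, 2/\alpha]$ and in particular $p(B') \le 2/\alpha$ while the freed-up weight is $w(B') \ge p(B')/(q(1-\alpha)) \ge p(B')/q$ because every $l \in L$ has $p_l/w_l \le q(1-\alpha)$, hence $w_l \ge p_l/(q(1-\alpha))$. Now I want to refill this freed weight with $H$-items not already in $A$. The capacity argument is that $w(A) \le x_H \le x \le W = \sum_{h\in H} w_h$, so $H \setminus A$ has total weight $\ge W - w(A)$, which should be enough room; and each $H$-item has unit profit $\ge q$, so packing weight $\Delta w$ worth of new $H$-items buys profit roughly $q \cdot \Delta w \ge q \cdot w(B')/q = p(B') > p(B') - (\text{profit lost})$... — this comparison is where the constant $2/\alpha$ must be pinned down carefully, using $p(B') > 2/\alpha - 2$ against the $(1-\alpha)$ gap to show the swap strictly increases total profit, contradicting optimality. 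More precisely, replacing $B'$ (profit $p(B')$, weight $w(B') \ge p(B')/(q(1-\alpha))$) by $H$-items of the same weight yields profit $\ge q \cdot w(B') \ge p(B')/(1-\alpha) = p(B') + \frac{\alpha}{1-\alpha} p(B') > p(B')$, strictly better — provided enough unused $H$-weight exists.

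The main obstacle I anticipate is exactly that last provision: verifying there is enough unused $H$-capacity to absorb weight $w(B')$, i.e. that $w(H \setminus A) \ge w(B')$, and that the refilled solution still has weight $\le x$. For the weight bound: the new solution has weight $(w(A) + \text{new }H\text{-weight}) + (w(B) - w(B')) \le x_H + w(B') + w(B) - w(B') = x_H + w(B) \le x$, if we add exactly $w(B')$ units of new $H$-weight — here the granularity of $H$-items is a nuisance, handled by the usual trick of allowing the last added $H$-item to overshoot slightly and observing that it still fits because we only needed weight $\le x$ and we had slack, or by noting $H$-items have weight $\le p_h/q \le 2/q$ which is small relative to $w(B') \ge (2/\alpha-2)/q$. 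For the capacity: $w(H\setminus A) = W - w(A \cap H) \ge W - x_H \ge W - x \ge 0$; we need this to be $\ge w(B')$. If $W - x_H < w(B')$, then $x_H > W - w(B')$, but also $x_H + x_L = x \le W$ forces $x_L < w(B')$, contradicting $w(B) \ge w(B') > x_L$... wait, $w(B) \le x_L$, so this case cannot arise once $w(B') \le w(B) \le x_L$. Thus the capacity is automatically sufficient, and the argument closes. I would write this up by first stating the exchange cleanly, then discharging the two side conditions (weight feasibility, $H$-capacity) as short observations.
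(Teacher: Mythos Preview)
Your proposal is correct and is at heart the same exchange argument the paper uses---swap $L$-weight for unused $H$-weight and exploit the unit-profit gap $q$ versus $q(1-\alpha)$ together with the per-item profit bound $p_i\le 2$---but the two are organized differently.

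The paper argues directly rather than by contradiction: it lets $w_L$ be the optimal $L$-weight in the split of $x$, builds a greedy fill set $K\subseteq H\setminus J$ with $w_K\le w_L$ and $w_L-w_K<2/q$, and from $f_H(x)\ge p(J)+p(K)$ derives $qw_L-2<f_H(x)-f_H(x-w_L)\le f_L(w_L)\le q(1-\alpha)w_L$, whence $q\alpha w_L<2$ and $f_L(w_L)\le qw_L\le 2/\alpha$. No calibrated subset $B'$ is needed; the whole of $f_L(w_L)$ is bounded in one shot.

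Your version instead assumes $p(B)>2/\alpha$, peels off a subset $B'$ with $p(B')\in(2/\alpha-2,\,2/\alpha]$, and refills with $H$-items up to weight $w(B')$. The specific lower cutoff $2/\alpha-2$ is exactly what absorbs the granularity loss of one $H$-item (profit $\le 2$) in the inequality $p(K)\ge qw(K)\ge qw(B')-2\ge \tfrac{p(B')}{1-\alpha}-2>p(B')$; and your capacity check $w(B')\le w(B)\le x_L\le W-x_H\le w(H\setminus A)$ is correct. So the argument closes. The paper's route is marginally slicker because it avoids choosing the threshold for $B'$, but your combinatorial swap is perhaps more transparent about \emph{why} the constant $2/\alpha$ appears.
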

\begin{proof}
By greedy, $f(W)=\sum_{h\in H}p_h = \tilde f(W)$ clearly holds. Now consider $0\le x<W$. Suppose $f_L(x')+f_H(x-x')$ achieves its maximum value at $x'= w_L$, i.e.,  $f(x) =f_L(w_L) + f_H(x-w_L)$. It suffices to prove $f_L(w_L) \le \frac{2}{\alpha}$.

Let $J \subseteq H$ be a subset of items with total weight $w_J\le x-w_L$ and total profit achieving optimal value $f_H(x-w_L)$. Let $K\subseteq H \backslash J$ be a subset of items with total weight $w_K$, such that  $w_K \le w_L$, and $w_K+w_i>w_L$ for every remaining item $i \in H\backslash (J\cup K)$. Such $K$ can be constructed by a simple greedy algorithm. 
  
  Since $w_J+w_K \le (x-w_L)+w_L < W = \sum_{h\in H}w_h$, the remaining set $H\backslash (J\cup K)$ contains at least one item $h_0$. Hence, $w_L-w_K< w_{h_0} = p_{h_0} /\frac{p_{h_0}}{w_{h_0}} \le 2/q$, and equivalently $qw_K>qw_L-2$.

 Since $J\cup K$ is a subset of $H$ with total weight bounded by $x$, we have $f_H(x) \ge \sum_{k \in K}p_k +\sum_{j \in J}p_j$, and thus $f_H(x)-f_H(x-w_L) = f_H(x) - \sum_{j\in J}p_j \ge \sum_{k\in K} p_k \ge qw_K >qw_L-2$.

Hence $qw_L-2 < f_H(x) - f_H(x-w_L) \le f(x) - f_H(x-w_L) = f_L(w_L)  \le  q(1-\alpha)w_L$, which shows that $q\alpha w_L \le 2$. So $f_L(w_L) \le q(1-\alpha)w_L \le qw_L \le 2/\alpha$, which concludes the proof.
\end{proof}

\subsection{FPTAS for Subset Sum}
We will use the efficient FPTAS for the subset sum problem by Kellerer et al. \cite{kellerer2003efficient} as a subroutine in our algorithm.

\begin{lemma}[\cite{kellerer2003efficient}, implicit]
\label{lemmakellerer}
Let $I$ be a set of $n$ items and $W$ be a number. We can obtain a list $S$ of $O(\frac{1}{\eps})$ numbers in $O(n + (\frac{1}{\eps})^{2}\log\frac{1}{\eps})$ time, such that for every $s\le W$ that is the subset sum $s=\sum_{j\in J}w_j$ of some subset $J\subseteq I$, there exists $s'\in S$ with $s-\eps W \le s'\le s$.
\end{lemma}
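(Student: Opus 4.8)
The plan is to reconstruct this statement from the standard structure of an efficient subset-sum FPTAS, as in Kellerer et al.~\cite{kellerer2003efficient}. As a first step, discard every item with $w_i>W$ (useless for any feasible subset), rescale all weights by $1/(\eps W)$ and round to integers, so that the threshold $\eps W$ becomes $1$ and the target $W$ becomes $1/\eps$; it then suffices to produce a list $S$ that is correct with additive error $O(1)$, since a constant-factor rescaling of $\eps$ at the end recovers the stated error $\eps W$. Split the items into \emph{small} items (rescaled weight $<1$) and \emph{big} items (rescaled weight $\ge 1$); the key structural fact is that any feasible subset contains at most $1/\eps$ big items.

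The small items are easy: list them in any order and form the prefix sums $P_0=0<P_1<\cdots$; since consecutive prefix sums differ by less than $1$, greedily extracting a roughly $1$-spaced subnet of $\{P_i : P_i\le 1/\eps\}$ gives $O(1/\eps)$ values such that every subset sum of small items that is at most $1/\eps$ lies within $O(1)$ above some kept value. This costs $O(n)$ time and needs no sorting.

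For the big items I would group them into the $O(\log\eps^{-1})$ weight classes $C_k=\{i : w_i\in[2^k,2^{k+1})\}$, noting that a feasible subset uses at most $2^{-k}/\eps$ items of class $C_k$. Within $C_k$ I round weights down to a suitable grid (fine enough that the total rounding error summed over all classes stays $O(1)$, coarse enough that the grid has few points) and delete surplus copies of each rounded weight beyond $2^{-k}/\eps$; then I compute the set of achievable class-$C_k$ sums that are $\le 1/\eps$ by a bounded-knapsack dynamic program, handling repeated weights by sliding-window updates of the DP array. Finally I combine the $O(\log\eps^{-1})$ class sets by subset-sum merging in a balanced binary tree, re-netting each partial result down to $O(1/\eps)$ entries after every merge so that intermediate sizes stay bounded; this yields a big-item net $S_B$ of $O(1/\eps)$ values. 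One more subset-sum merge of $S_B$ with the small-item net, followed by a final net to $O(1/\eps)$ values, produces $S$. Correctness then follows by writing any witnessing subset as $J_B\cup J_L$, approximating $\sum_{J_B}w$ via the rounding bound together with $S_B$ and $\sum_{J_L}w$ via the small-item net, and observing that each subset-sum merge and the final netting contribute only an additive $O(1)$.

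The main obstacle is making the big-item stage run in $O(\eps^{-2}\log\eps^{-1})$ rather than a larger polynomial in $1/\eps$: a single DP array at the finest resolution has size $\Theta(\eps^{-2})$, and updating it with $\Theta(\eps^{-2})$ distinct rounded weights already costs $\Theta(\eps^{-4})$, so one must keep the classes $C_k$ on separate coarser grids, bound the number of surviving big items and of distinct weights within each class by a small power of $1/\eps$ (using both the $\le 2^{-k}/\eps$ capacity bound and the copy-deletion), and rely on the balanced merge to avoid ever handling the full-resolution product — this multi-resolution bookkeeping is precisely the content of~\cite{kellerer2003efficient}. A secondary point requiring care is that the per-class rounding errors and the per-merge netting errors must be arranged so as to telescope to $O(\eps W)$ overall instead of picking up an extra $\log\eps^{-1}$ factor, which is what dictates the per-class grid spacings. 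The remaining pieces — the preprocessing, the small-item prefix-sum net, and the final merge — are routine and fit within $O(n+\eps^{-2})$ time.
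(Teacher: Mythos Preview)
The paper does not actually give its own proof of this lemma: it is stated as implicit in Kellerer et al.~\cite{kellerer2003efficient}, and the only accompanying text is the one-sentence \emph{Remark} that ``this result wasn't explicitly stated in \cite{kellerer2003efficient}, but can be easily seen from their analysis of the correctness of the FPTAS.'' So there is no in-paper argument to compare against; your proposal goes well beyond what the paper does by attempting a full reconstruction of the Kellerer--Pferschy--Pisinger algorithm.

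As a reconstruction, your outline is broadly faithful to the actual FPTAS in~\cite{kellerer2003efficient}: the small/big split, the prefix-sum net for small items (which works because any achievable small-item sum $s$ is sandwiched between two consecutive prefix sums differing by less than $1$), the dyadic weight classes for big items with per-class rounding and bounded multiplicities, and the balanced merging with re-netting. A couple of details in your sketch would need tightening if you were to write this out in full: the phrase ``rescale by $1/(\eps W)$ and round to integers'' is misleading, since rounding to integers at that scale would zero out all small items --- what you really want is just the rescaling so the error budget is $O(1)$, with integer rounding applied only inside the big-item DP at the class-specific grid spacing; and the claim that the per-class rounding and per-merge netting errors ``telescope to $O(\eps W)$'' rather than picking up a $\log\eps^{-1}$ factor is the genuinely delicate point of the analysis, which you correctly flag but would need to carry out explicitly (in~\cite{kellerer2003efficient} this is handled by making the class-$k$ grid spacing proportional to $2^k$, so the per-item error times the at-most-$2^{-k}/\eps$ items per class gives a constant per class, summing to $O(\log\eps^{-1})$ --- which is then absorbed by rescaling~$\eps$). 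None of this is a real gap, just the usual slack in a proof plan for a result being quoted from elsewhere.
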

\begin{remark}
This result wasn't explicitly stated in \cite{kellerer2003efficient}, but can be easily seen from their analysis of the correctness of the FPTAS.
\end{remark}

\begin{corollary}
\label{subsetsum}
Let $I$ be a set of $n$ items with $p_i \in [1,2]$ and $p_i=w_i$ for every item $i \in I$.  We can approximate $f_I$ with factor $1+O(\eps)$ in $O(n\log n+\eps^{-2} \log \frac{1}{\eps}\log n)$ time.
\end{corollary}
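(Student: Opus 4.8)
The plan is to reduce the task to $O(\log n)$ invocations of Lemma~\ref{lemmakellerer}, one per dyadic scale of the subset-sum value, and then glue the resulting sparse lists into a single monotone step function; the only non-mechanical point will be that the lists produced by Kellerer et al.\ consist of genuine subset sums. Note first that, since $p_i=w_i$, we have $f_I(x)=\max\{\sum_{j\in J}w_j : \sum_{j\in J}w_j\le x,\ J\subseteq I\}$, i.e.\ $f_I(x)$ is the largest realizable subset sum not exceeding $x$; and since every $w_i\in[1,2]$, every realizable subset sum lies in $\{0\}\cup[1,W^*]$ with $W^*:=\sum_{i\in I}w_i\le 2n$.

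Concretely, I would set $L:=\lceil\log(2n)\rceil=O(\log n)$ and, for each $j=0,1,\dots,L$, call Lemma~\ref{lemmakellerer} with capacity $W_j:=2^j$ and precision $\eps$ to obtain a list $S_j$ of $O(\eps^{-1})$ numbers such that every realizable subset sum $s\le W_j$ has some $s'\in S_j$ with $s-\eps W_j\le s'\le s$. I would then observe that each element of $S_j$ is itself a realizable subset sum of $I$ --- this is the one place we use slightly more than the literal statement of Lemma~\ref{lemmakellerer}, but it is immediate from the dynamic program of Kellerer et al., which only ever stores achievable sums. Setting $S:=\{0\}\cup\bigcup_{j=0}^{L}S_j$ (so $|S|=O(\eps^{-1}\log n)=\tilde O(\eps^{-1})$ by $\log n=O(\log\eps^{-1})$), sorting it, and defining $\tilde f(x):=\max\{s'\in S:s'\le x\}$ gives the output. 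The running time is then $O(\log n)\cdot O(n+\eps^{-2}\log\tfrac1\eps)=O(n\log n+\eps^{-2}\log\tfrac1\eps\log n)$ for the calls, plus $\tilde O(\eps^{-1})$ for sorting and $O(|S|)$ for assembling $\tilde f$, all within budget.

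For correctness, $\tilde f(x)\le f_I(x)$ holds because every $s'\in S$ with $s'\le x$ is a realizable subset sum of weight $\le x$, hence a feasible value of $f_I(x)$. For the reverse direction, fix $x\ge 0$ and let $s:=f_I(x)$; if $s=0$ then $\tilde f(x)=0$, and otherwise $1\le s\le W^*$, so there is $j\le L$ with $W_j/2<s\le W_j$ (take $j=0$ when $s=1$). Since $s$ is realizable and $\le W_j$, some $s'\in S_j\subseteq S$ satisfies $s-\eps W_j\le s'\le s\le x$, so $\tilde f(x)\ge s'\ge s-\eps W_j>(1-2\eps)\,f_I(x)$ using $W_j<2s$ (for $j=0$ this reads $\tilde f(x)\ge 1-\eps\ge(1-\eps)f_I(x)$), whence $f_I(x)\le(1+O(\eps))\tilde f(x)$. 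The hard part --- such as it is --- is purely the realizability claim for the elements of the $S_j$; the rest is a routine dyadic-scaling argument converting the additive $\eps W_j$ guarantee, at the scale matched to $s$, into a multiplicative $1+O(\eps)$ guarantee.
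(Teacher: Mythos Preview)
Your proof is correct and follows essentially the same approach as the paper: apply Lemma~\ref{lemmakellerer} at dyadic capacities $W=2^j$ for $0\le j\le O(\log n)$, merge the lists, and use that an additive $\eps W$ error is multiplicative $1+O(\eps)$ on the matched scale $W/2<s\le W$. You are in fact more careful than the paper in one respect: you explicitly flag that the lower bound $\tilde f(x)\le f_I(x)$ requires the elements of each $S_j$ to be genuine subset sums, which the literal statement of Lemma~\ref{lemmakellerer} does not assert but which the Kellerer et al.\ dynamic program guarantees.
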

\begin{proof}
Notice that approximating $s$ with additive error $\eps W$ implies approximation factor $1+O(\eps)$ for $W/2\le s \le W$. So we simply apply Lemma \ref{lemmakellerer} with $W = 2^j$ for $0\le j \le 1+\log n$, and merge all obtained lists. 
\end{proof}

\subsection{Improved algorithm}

\begin{lemma}
\label{lemmaimproved}
Let $I$ be a set of $n$ items with $p_i \in [1,2]$ for every $i \in I$. We can approximate $f_I$ with factor $1+O(\eps)$ in $O(n \log \frac{1}{\eps} + (\frac{1}{\eps})^{9/4}/2^{\Omega(\sqrt{\log(1/\eps)})})$ time.
\end{lemma}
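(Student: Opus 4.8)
The plan is to follow the three-way-split strategy sketched in Section~\ref{secoutline}, but using the $\tilde O(n+\eps^{-2}B^{1/3})$ algorithm of Lemma~\ref{lemmadependsonB} and the $\tilde O(\eps^{-3/2}n^{3/4}+\eps^{-2})$ algorithm of Lemma~\ref{lemmadependsonm} in place of Chan's weaker bounds, with the gap parameter retuned. By the reduction of Section~\ref{secbiggreedy} it suffices to $(1+O(\eps))$-approximate $\min\{f_I,B\}$ for $B=\eps^{-1}$ and then take the max with the global sorted-greedy function of Lemma~\ref{sortgreed}. We may assume $n\ge\eps^{-1}$, since for $n=O(\eps^{-1})$ Theorem~\ref{mainsmall} already runs in $O\big((n^{3/4}\eps^{-3/2}+\eps^{-2})/2^{\Omega(\sqrt{\log\eps^{-1}})}\big)=O(\eps^{-9/4}/2^{\Omega(\sqrt{\log\eps^{-1}})})$ time. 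Sort the items by nonincreasing unit profit $p_i/w_i$; fix $\alpha:=\eps^{3/4}/2^{c\sqrt{\log\eps^{-1}}}$ for a small constant $c>0$ to be chosen later; let $H$ be the top $\lceil\eps^{-1}\rceil$ items, $q:=\min_{h\in H}p_h/w_h$, $M$ the remaining items $i$ with $p_i/w_i>(1-\alpha)q$, and $L$ the remaining items.

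Next I would use Lemma~\ref{greedy} to confine the contribution of $L$. Applied to $H,L$ with $W:=\sum_{h\in H}w_h$ (note $\max_{l\in L}p_l/w_l\le(1-\alpha)q$), it gives $(f_H\oplus f_L)(x)=(f_H\oplus\min\{2/\alpha,f_L\})(x)$ for all $x\le W$; for $x>W$ both sides are at least $f_H(W)=\sum_{h\in H}p_h\ge|H|\ge B$, so after capping at $B$ they agree there too. Hence, writing $B_L:=2/\alpha$ and repeatedly using the elementary identity $\min\{g\oplus h,B\}=\min\{g\oplus\min\{h,B\},B\}$ (valid for $g,h\ge0$) to strip and re-attach caps,
\[
 \min\{f_I,B\}=\min\{f_H\oplus f_M\oplus\min\{B_L,f_L\},\,B\}.
\]
So it suffices to compute $(1+O(\eps))$-approximations of $\min\{f_H,B\}$, $\min\{f_M,B\}$ and $\min\{f_L,B_L\}$ (each a monotone step function of complexity $\tilde O(\eps^{-1})$ after rounding to powers of $1+\eps$), and merge them with two applications of Lemma~\ref{dc} before a final cap at $B$; correctness is immediate from the displayed identity and the fact that $\oplus$ and capping both preserve a $1+O(\eps)$ factor.

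It then remains to bound the three subproblems. For $H$ (which has $\Theta(\eps^{-1})$ items with $p_i\in[1,2]$), Lemma~\ref{lemmadependsonm} gives a $(1+O(\eps))$-approximation of $f_H$ in $O(\eps^{-3/2}|H|^{3/4}/2^{\Omega(\sqrt{\log\eps^{-1}})})=O(\eps^{-9/4}/2^{\Omega(\sqrt{\log\eps^{-1}})})$ time. For $L$, round every $p_i$ down to a multiple of $\eps$ (a $1+\eps$ loss), leaving $O(\eps^{-1})$ distinct values, each yielding a $p$-uniform pseudo-concave function; since $B_L=2/\alpha=\Theta(\eps^{-3/4}2^{c\sqrt{\log\eps^{-1}}})$ lies in $[\Omega(\eps^{-0.01}),O(\eps^{-1})]$, Lemma~\ref{lemmadependsonB} gives a $(1+O(\eps))$-approximation of $\min\{f_L,B_L\}$ in $O(\eps^{-2}B_L^{1/3}/2^{\Omega(\sqrt{\log\eps^{-1}})})=O(\eps^{-9/4}\,2^{(c/3)\sqrt{\log\eps^{-1}}}/2^{\Omega(\sqrt{\log\eps^{-1}})})$ time, which is $O(\eps^{-9/4}/2^{\Omega(\sqrt{\log\eps^{-1}})})$ once $c$ is small enough relative to that lemma's exponent constant. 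For $M$, round the unit profits $p_i/w_i$ down to powers of $1+\eps$ (another $1+\eps$ loss): only $O(\log_{1+\eps}\frac{1}{1-\alpha})=O(\alpha/\eps)=O(\eps^{-1/4}/2^{c\sqrt{\log\eps^{-1}}})$ distinct unit-profit classes remain, and within a class every $p_i=uw_i$, so its profit function is an $x$-axis rescaling of a subset-sum function on items of profit $=$ weight in $[1,2]$, which Corollary~\ref{subsetsum} approximates in $O(n_u\log n_u+\eps^{-2}\,\mathrm{polylog}(\eps^{-1}))$ time; summing over classes, together with the Lemma~\ref{dc} merge of the class functions, costs $O\big(n\log\eps^{-1}+(\alpha/\eps)\eps^{-2}\,\mathrm{polylog}(\eps^{-1})\big)$, and $(\alpha/\eps)\eps^{-2}\,\mathrm{polylog}(\eps^{-1})=\eps^{-9/4}\,\mathrm{polylog}(\eps^{-1})/2^{c\sqrt{\log\eps^{-1}}}=\eps^{-9/4}/2^{\Omega(\sqrt{\log\eps^{-1}})}$ because a polylogarithmic factor is absorbed into the exponential. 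Adding the $O(n\log n)=O(n\log\eps^{-1})$ cost of the initial sort and of Lemma~\ref{sortgreed}, the total is $O(n\log\eps^{-1}+\eps^{-9/4}/2^{\Omega(\sqrt{\log\eps^{-1}})})$.

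The main obstacle is precisely the tuning of $\alpha$: unlike the $\eps^{-7/3}$ warm-up, here $M$ is handled by a subset-sum FPTAS with no superpolylogarithmic speedup, so its cost is $\Theta(\alpha\eps^{-3})$ up to polylog factors, and to reach the claimed $\eps^{-9/4}/2^{\Omega(\sqrt{\log\eps^{-1}})}$ bound (rather than merely $\tilde O(\eps^{-9/4})$) one must push $\alpha$ a factor $2^{\Theta(\sqrt{\log\eps^{-1}})}$ below $\eps^{3/4}$; this inflates the $L$-cost through Lemma~\ref{lemmadependsonB} by $2^{\Theta(\sqrt{\log\eps^{-1}})/3}$, and the crux is to check that this inflation is still dominated by that lemma's internal speedup (which pins down how small $c$ must be) while keeping $B_L$ inside the admissible window $[\Omega(\eps^{-0.01}),O(\eps^{-1})]$. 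The remaining work — verifying the capping identities, confirming that the intermediate step functions have complexity $\tilde O(\eps^{-1})$ so that Lemma~\ref{dc} applies, and setting up the $x$-axis rescaling that turns each unit-profit class of $M$ into an instance of Corollary~\ref{subsetsum} — is routine bookkeeping.
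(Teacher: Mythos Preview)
Your proposal is correct and follows essentially the same approach as the paper's own proof: the same three-way split $H\cup M\cup L$ by unit profit, the same use of Lemma~\ref{greedy} to cap $f_L$ at $2/\alpha$, Lemma~\ref{lemmadependsonm} for $H$, Lemma~\ref{lemmadependsonB} for $L$, Corollary~\ref{subsetsum} on unit-profit classes for $M$, and the same choice $\alpha=\eps^{3/4}/2^{c\sqrt{\log(1/\eps)}}$. You are in fact slightly more explicit than the paper in justifying, via the capping identity $\min\{g\oplus h,B\}=\min\{g\oplus\min\{h,B\},B\}$, why the cap on $f_L$ survives the convolution with $f_M$; the paper leaves this implicit.
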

\begin{proof}

Let $B= \lceil \eps^{-1} \rceil$ and  assume $n \ge B$ (if $n<B$, we can simply apply Lemma \ref{lemmadependsonm}). By Lemma \ref{sortgreed}, we can approximate $f_I$ with additive error $O(\eps B)$ in $O(n \log \frac{1}{\eps})$ time, so we only need to approximate $\min\{f_I,B\}$ with factor $1+O(\eps)$.

We sort the items by their unit profits $p_i/w_i$. Let set $H$ contain the top $B$ items with the highest unit profits. Define $q = \min_{h \in H}\frac{p_h}{w_h}$, and let $M$ be the set of remaining items $i$ with $q(1-\alpha) \le \frac{p_i}{w_i} \le q$, where parameter $0<\alpha<1$ is to be determined later. Let set $L$ contain the remaining items not included in $H$ or $M$.

Using Lemma \ref{lemmadependsonm}, we can compute $\tilde f_H$ which approximates $f_H$ with factor $1+O(\eps)$ in time $O(B^{3/4}\eps^{-3/2}/2^{\Omega(\sqrt{\log(1/\eps)})}) = O(\eps^{-9/4}/2^{\Omega(\sqrt{\log(1/\eps)})})$.

Since $\max_{l \in L}\frac{p_l}{w_l}<q(1-\alpha)$, Lemma \ref{greedy} states that $f_H \oplus f_L$ and $f_H \oplus \min \{2/\alpha, f_L\}$ agree when $x\le W_H =   \sum_{h \in H}w_h$. Since $(f_H\oplus f_L)(W_H)  = \sum_{h\in H} p_h \ge B$, this implies $\min\{B,f_H \oplus f_L\} = \min\{B, f_H \oplus \min\{2/\alpha, f_L\}\}$. For every item $l \in L$, we round down $p_l$ to a power of $1+\eps$, so that there are only $\log_{1+\eps}2 = O(\eps^{-1})$ distinct values. This only multiplies the approximation factor by $1+\eps$.  Then we use Lemma \ref{lemmadependsonB} to compute an approximation of $\min \{2/\alpha,f_L\}$ with factor $1+O(\eps)$ in $\tilde O(\eps^{-2} (2/\alpha)^{1/3}/2^{\Omega(\sqrt{\log(1/\eps)})})$ time. We merge it with $\tilde f_H$ and obtain an approximation of $\min \{f_H\oplus f_L, B\}$ with factor $1+O(\eps)$.

For every $m\in M$, we round down $p_m$ so that the unit profit $p_m/w_m$ becomes a  power of $1+\eps$. After rounding, the approximation factor is only multiplied by $1+\eps$, and there are at most $\log_{1+\eps}\frac{q}{q(1-\alpha)} = O(\alpha/\eps)$ distinct unit profits in $M$. Let $M_q$ denote the set of items in $M$ with unit profit $q$. For each $q$, we use Lemma \ref{subsetsum} to obtain a $1+\eps$ approximation of the function $f_{M_q}$ in $O(|M_q|+\eps^{-2})$ time. Then we use Lemma \ref{dc} to merge these functions and obtain a $1+\eps$ approximation of $f_M$. 
The total time is $O(|M| \log n) + \tilde O(\alpha\eps^{-3})$.

Finally we merge the functions and get an approximation of $\min \{ B, f_L\oplus f_H \oplus f_M\}$ with factor $1+O(\eps)$.
The total time is $O(n\log\frac{1}{\eps}) + \tilde O(\alpha \eps^{-3} + \eps^{-2} (2/\alpha)^{1/3}/2^{\Omega(\sqrt{\log(1/\eps)})})$, which is $ O(n\log\frac{1}{\eps} +\eps^{-9/4}/2^{\Omega(\sqrt{\log(1/\eps)})})$ if we choose $\alpha = \eps^{3/4}/2^{c\sqrt{\log(1/\eps)}}$ for a sufficiently small constant $c$. 
\end{proof}
\begin{corollary}[restated Theorem \ref{mainthm}]
  There is a deterministic $(1+\eps)$-approximation algorithm for 0-1 knapsack with running time $O(n \log \frac{1}{\eps} + (\frac{1}{\eps})^{9/4} /2^{\Omega(\sqrt{\log(1/\eps)})})$.
\end{corollary}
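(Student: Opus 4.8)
The plan is to derive the general-case algorithm from Lemma~\ref{lemmaimproved}, which already handles instances with all $p_i \in [1,2]$, via the standard grouping-and-merging reduction sketched after Lemma~\ref{dc}. First I would dispose of the easy regime: by the preliminary discussion in Section~\ref{secprelim} we may assume $n = O(\eps^{-4})$, since for larger $n$ Lawler's $O(n\log\eps^{-1} + \eps^{-4})$ algorithm runs in $O(n\log\eps^{-1})$ time, which already fits the claimed bound. We may also discard every item with $p_i \le \tfrac{\eps}{n}\max_j p_j$, losing only a $1+O(\eps)$ factor in the optimum, so that $\max_j p_j / \min_j p_j \le n/\eps = \eps^{-O(1)}$.

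Next I would partition the items into $s = O(\log(\max_j p_j/\min_j p_j)) = O(\log\eps^{-1})$ groups $I^{(1)},\dots,I^{(s)}$, placing into $I^{(j)}$ the items with $p_i \in [2^{a_j},2^{a_j+1})$; dividing all profits in group $j$ by $2^{a_j}$ turns it into an instance with profits in $[1,2]$. Applying Lemma~\ref{lemmaimproved} to each rescaled group and scaling the result back by $2^{a_j}$, I obtain for each $j$ a monotone step function $\tilde f_j$ approximating $f_{I^{(j)}}$ with factor $1+O(\eps)$; by the remark in Section~\ref{secprelim} each $\tilde f_j$ may be taken to have complexity $\tilde O(\eps^{-1})$ and range contained in $\{0\}\cup[A,B]$ with $B/A \le n\cdot\max_j p_j/\min_j p_j = \eps^{-O(1)}$. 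The running time for group $j$ is $O(n_j\log\eps^{-1} + \eps^{-9/4}/2^{\Omega(\sqrt{\log\eps^{-1}})})$, where $n_j = |I^{(j)}|$.

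Finally I would merge $\tilde f_1,\dots,\tilde f_s$ with Lemma~\ref{dc}, applied to $m = O(\log\eps^{-1})$ functions of total complexity $\tilde O(\eps^{-1})$ and range ratio $B/A = \eps^{-O(1)}$, obtaining a $1+O(\eps)$ approximation of $f_I = f_{I^{(1)}}\oplus\cdots\oplus f_{I^{(s)}}$ in $O(n) + \tilde O(\eps^{-2}\log\eps^{-1}/2^{\Omega(\sqrt{\log\eps^{-1}})})$ time. Summing all contributions: the $n_j\log\eps^{-1}$ terms add up to $O(n\log\eps^{-1})$ since $\sum_j n_j = n$, while the remaining cost is $O(\log\eps^{-1})$ copies of $\eps^{-9/4}/2^{\Omega(\sqrt{\log\eps^{-1}})}$ together with a $\tilde O(\eps^{-2})/2^{\Omega(\sqrt{\log\eps^{-1}})}$ merging term. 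Since $(\log\eps^{-1})^{O(1)} \le 2^{O(\sqrt{\log\eps^{-1}})}$, absorbing these polylogarithmic overheads into the $2^{\Omega(\sqrt{\log\eps^{-1}})}$ denominator (with a smaller hidden constant) yields the bound $O(n\log\eps^{-1} + \eps^{-9/4}/2^{\Omega(\sqrt{\log\eps^{-1}})})$; scaling $\eps$ down by a suitable constant at the outset converts the $1+O(\eps)$ guarantee to $1+\eps$.

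There is essentially no hard step here — all the substance sits in Lemma~\ref{lemmaimproved} and the lemmas it rests on. The only point requiring care is the bookkeeping of polylogarithmic factors: one must check that the $O(\log\eps^{-1})$ blow-up from the number of profit groups, the $\log(B/A)$ factor in Lemma~\ref{dc}, and the $\tilde O(\cdot)$ overheads inside the subroutines are all genuinely dominated by the superpolylogarithmic speedup $2^{\Omega(\sqrt{\log\eps^{-1}})}$, so that they disappear into the stated bound without leaving an extra $\mathrm{polylog}(1/\eps)$ factor.
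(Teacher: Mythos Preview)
Your proposal is correct and follows exactly the same approach as the paper's own proof: partition items into $O(\log(n/\eps)) = O(\log\eps^{-1})$ profit groups, apply Lemma~\ref{lemmaimproved} to each, and merge via Lemma~\ref{dc}. The paper compresses this into two sentences; you have simply unpacked the preliminary reductions from Section~\ref{secprelim} and the running-time bookkeeping (in particular the absorption of polylogarithmic overheads into $2^{\Omega(\sqrt{\log(1/\eps)})}$) that the paper leaves implicit.
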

\begin{proof}
 Divide the items into $O(\log \frac{n}{\eps})$ groups, each containing items with $p_i \in [2^{j},2^{j+1}]$ for some $j$.
Use Lemma \ref{lemmaimproved} to solve each group, and merge them using Lemma \ref{dc}. 
\end{proof}



\bibliography{lipics-v2019-sample-article}

\appendix
\section{Proof of Lemma \ref{lemmanumbertheory}}
\label{apdnumbertheory}

\begin{theorem}[Reminder of Lemma \ref{lemmanumbertheory}]
Let $T_1,T_2,\dots,T_d$ be positive real numbers satisfying $T_1\ge 2$ and $T_{i+1}\ge 2T_{i}$. There exist at least $T_d\big /(\log T_d)^{O(d)}$ integers $t$ satisfying the following condition: $t$ can be written as a product of integers $t=n_1n_2\cdots n_d$, such that $n_1n_2\cdots n_i\in (T_i/2,T_i]$ for every $1\le i \le d$. 
\end{theorem}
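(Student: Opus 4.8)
The plan is to prove the lower bound by counting only those products $t = n_1\cdots n_d$ in which \emph{every factor $n_i$ is a prime}; restricting to prime factors is what makes the combinatorics of recovering a factorization from $t$ trivial to control, which is the only delicate point. Write $R_i := T_i/T_{i-1}$ with $T_0 := 1$, so that $R_i \ge 2$ for all $i$ and $\prod_i R_i = T_d$. First I would dispose of degenerate ranges: if $d$ exceeds $\log T_d$ — or, more generally, if $(\log T_d)^{Cd} \ge T_d$ for the absolute constant $C$ that eventually emerges — then the asserted quantity $T_d/(\log T_d)^{Cd}$ is below $1$ and there is nothing to prove, so I may assume $d \le \log T_d$ and that $T_d$ exceeds a sufficiently large absolute constant (the statement being trivial for bounded $T_d$).

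Next I lower-bound the number $N$ of \emph{prime tuples} $(n_1,\dots,n_d)$ — sequences of primes with $n_1\cdots n_i \in (T_i/2, T_i]$ for all $i$ — by a greedy left-to-right construction. Suppose $n_1,\dots,n_{i-1}$ have been chosen with partial product $m_{i-1} \in (T_{i-1}/2, T_{i-1}]$ (with $m_0 := 1$); the admissible values of $n_i$ are exactly the primes in the interval $\bigl(T_i/(2m_{i-1}),\, T_i/m_{i-1}\bigr]$, which has ratio $2$ and left endpoint at least $T_i/(2T_{i-1}) = R_i/2 \ge 1$. By Bertrand's postulate this interval always contains at least one prime, and by an explicit Chebyshev-type bound (for a ratio-$2$ interval $(y,2y]$ with $y$ above an absolute constant one has $\pi(2y)-\pi(y) \gg y/\log(y+2)$) it contains $\gg R_i/\log T_d$ primes whenever $R_i$ is not tiny. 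Since this lower bound on the number of admissible primes is uniform over the (path-dependent) prefix, the greedy count satisfies $N \ge \prod_{i=1}^d \max\{1,\, c_0 R_i/\log T_d\}$ for an absolute constant $c_0 > 0$. Splitting off the at most $d$ ``cramped'' levels with $R_i < (\log T_d)/c_0$ (each contributing only the trivial factor $1$), bounding the remaining levels, and using $\prod_i R_i = T_d$, this gives $N \ge T_d/(\log T_d)^{O(d)}$.

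Finally, a valid $t$ produced this way is a product of exactly $d$ primes counted with multiplicity, so the number of prime tuples with product $t$ is at most the number of orderings of the multiset of prime factors of $t$, hence at most $d! \le d^d$. Therefore the number of \emph{distinct} valid integers $t$ is at least $N/d! \ge T_d/\bigl((\log T_d)^{O(d)}\, d^d\bigr)$, and since $d \le \log T_d$ we have $d^d \le (\log T_d)^d$, which yields the claimed bound $T_d/(\log T_d)^{O(d)}$.

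I expect the only real work to be in the second step: getting a prime-counting estimate for ratio-$2$ intervals that is genuinely uniform over \emph{every} admissible left endpoint $y \ge 1$ (so that it applies no matter which prefix the greedy construction has produced), together with the bookkeeping of the cramped levels and the small-$T_d$ boundary cases. The multiplicity control in the third step, by contrast, is immediate once one insists on prime factors — which is precisely the reason for building $t$ out of primes rather than arbitrary integers.
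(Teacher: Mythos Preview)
Your proposal is correct and follows essentially the same strategy as the paper's proof: count ordered prime $d$-tuples with the required partial products via a greedy/inductive argument using a Chebyshev-type lower bound on primes in ratio-$2$ intervals, then divide by $d!$ to pass to distinct products, absorbing $d^d$ via $d \le \log T_d$. The paper's version is a bit cleaner because it notes that the hypotheses already force $T_d \ge 2^d$ (so your degenerate-case split is moot) and invokes a Chebyshev bound $\pi(x)-\pi(x/2)\ge x/(C\log x)$ valid for all $x\ge 2$, which directly yields $N_k \ge T_k/(C\log T_k)^k$ by induction and removes the need for your cramped-level bookkeeping; but the underlying ideas are identical.
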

\begin{proof}

For every $1\le k \le d$, we say an ordered $k$-tuple $(p_1,p_2,\dots,p_k)$ is \emph{valid} if every $p_i$ is prime, and $p_1p_2\cdots p_i \in (T_i/2,T_i]$ for every $1\le i \le k$. Then the product $t=p_1p_2\cdots p_d$ of any valid $d$-tuple $(p_1,\dots,p_d)$  satisfies our condition. For any integer $t$, there are at most $d!$ different valid $d$-tuples with product $t$ (which could be obtained by permuting $t$'s prime factors).  Let $N_k$ denote the number of valid $k$-tuples.
Then it suffices to show $N_d/(d!) \ge T_d/(\log T_d)^{O(d)}$.

By the prime number theorem and  Bertrand-Chebyshev theorem, there exists a positive constant $C$ such that $$\pi(x)-\pi (x/2 ) \ge x/(C\log x),\,\,\text{for all $x\ge 2$,}$$ where $\pi(x)$ denotes the number of primes less than or equal to $x$. 
 We will prove  $N_k \ge T_k/(C\log T_k)^k$ for all $1\le k\le d$ by induction. 
 
First note that this statement is trivial for $k=1$. 
For $k\ge 2$, a valid $k$-tuple $(p_1,\dots,p_k)$ can be obtained by appending any prime $p_k\in \big (T_k/(2P), T_k/P\big ]$ to any valid $(k-1)$-tuple $(p_1,\dots,p_{k-1})$ with product $P=p_1\cdots p_{k-1} \le T_{k-1}$. The number of such primes $p_k$ is $$\pi(T_k/P)-\pi\big (T_k/(2P)\big )\ge \frac{T_k/P}{C\log (T_k/P)} \ge \frac{T_k/T_{k-1}}{C\log T_k}.$$ 
Summing over all valid $(k-1)$-tuples, we have
$$N_k \ge N_{k-1} \cdot \frac{T_k/T_{k-1}}{C\log T_k} \ge \frac{T_{k-1}}{(C\log T_{k-1})^{k-1}} \cdot \frac{T_k/T_{k-1}}{C\log T_k}\ge \frac{T_k}{(C\log T_k)^{k}}.$$

Hence, $N_d\ge T_d/(C\log T_d)^d$ by induction. Observe that $T_d\ge 2^{d}$  and we have
$$
 \frac{N_d}{d!}\ge 
 \frac{T_d}{(Cd\log T_d)^d} \ge \frac{T_d}{(C\log^2 T_d)^d} \ge \frac{T_d}{(\log T_d)^{O(d)}},
$$
which finishes the proof.
\end{proof}

\end{document}